\newtheorem*{theorem*}{Theorem} 
\newtheorem*{lemma*}{Lemma}
\newtheorem*{corollary*}{Corollary}
\newtheoremstyle{repeated}{}{}{\itshape}{}{\bfseries}{.}{.5em}{#1\thmnote{ #3}}
\theoremstyle{repeated}
\definecolor{darkgreen}{rgb}{0,0.6,0}
\definecolor{purple}{rgb}{1,0,1}
\definecolor{darkgreen}{rgb}{0,0.6,0}
\definecolor{purple}{rgb}{1,0,1}
\newcommand{\length}[1]{\ell(#1)}
\newcommand{\etodo}[2]{\ifnum\Comments=1{\todo[color=#1!25, size=\tiny]{#2}}\fi}
\title{Efficient Distributed Algorithms for Shape Reduction via Reconfigurable Circuits} 
\titlerunning{Efficient Distributed Algorithms for Shape Reduction via Reconfigurable Circuits} 
\author{Nada Almalki}{Department of Computer Science, University of Liverpool, UK}{n.almalki@liverpool.ac.uk}{https://orcid.org/0000-0001-9403-1702}{}
\author{Siddharth Gupta}{Department of Computer Science \& Information Systems, BITS Pilani Goa Campus, India}{siddharthg@goa.bits-pilani.ac.in}{https://orcid.org/0000-0003-4671-9822}{}
\author{Othon Michail}{Department of Computer Science, University of Liverpool, UK}{othon.michail@liverpool.ac.uk}{https://orcid.org/0000-0002-6234-3960}{}
\author{Andreas Padalkin}{Paderborn University, Germany}{andreas.padalkin@upb.de}{https://orcid.org/0000-0002-4601-9597}{This author was supported by the DFG Project SCHE 1592/10-1.}
\authorrunning{N. Almalki, S. Gupta, O. Michail, and A. Padalkin}
\keywords{growth process, shrinking process, collision avoidance, programmable matter 
}
\begin{document}

\maketitle

\begin{abstract}
Autonomous reconfiguration of agent-based systems is a key challenge in the study of programmable matter, distributed robotics, and molecular self-assembly. While substantial prior work has focused on \emph{size-preserving} transformations, much less is known about \emph{size-changing} transformations. Such transformations find application in natural processes, active self-assembly, and dynamical systems, where structures may evolve through the addition or removal of components controlled by local rules. In this paper, we study efficient distributed algorithms for transforming 2D geometric configurations of simple agents, called \emph{shapes}, using only local size-changing \emph{operations}. A novelty of our approach is the use of \emph{reconfigurable circuits} as the underlying communication model, a recently proposed model enabling instant node-to-node communication via primitive signals. Unlike previous work, we integrate collision avoidance as a core responsibility of the distributed algorithm.

We consider two graph update models: \emph{connectivity} and \emph{adjacency}. Let $n$ denote the number of agents and $k$ the number of turning points in the initial shape. In the connectivity model, we show that any tree-shaped configuration can be reduced to a single agent using only \emph{shrinking} operations in $O(k \log n)$ rounds w.h.p., and to its \emph{incompressible} form in $O(\log n)$ rounds w.h.p. given prior knowledge of the incompressible nodes, or in $O(k \log n)$ rounds otherwise. When both \emph{shrinking} and \emph{growth} operations are available, we give an algorithm that transforms any tree to a topologically equivalent one in $O(k \log n + \log^2 n)$ rounds w.h.p. On the negative side, we show that one cannot hope for $o(\log^2 n)$-round transformations for all shapes of $\Theta(\log n)$ turning points. In the adjacency model, we show that any connected shape can reduce itself to a single node using only shrinking in $O(\log n)$ rounds w.h.p.

\end{abstract}

\section{Introduction}\label{sec:intro}
Reconfiguration and autonomous formation of networks and geometric structures is an active area of research, with applications ranging from swarm and reconfigurable robotics to dynamic networks and self-assembly. A central problem in 
this area of reconfigurable systems 
is for a set of agents to transform from an initial into a target arrangement efficiently, and without violating structural and other constraints.

Much of the existing work has focused on \emph{size-preserving} transformations, which guide the system through a sequence of configuration changes without altering its overall size. Other work, particularly in algorithmic self-assembly and distributed computing, has studied passive \emph{size-changing} dynamics, whose accumulation leads to structural changes over time. Recent research has begun to investigate transformations that can actively form or deform a configuration by locally controlling the addition or removal of individual agents. These processes are motivated by biological and chemical systems, such as embryonic growth in multicellular organisms, and by dynamical systems, like cellular and self-reproducing automata. They are also related to recent work on algorithmic reconfiguration of graphs and networks~\cite{DBLP:journals/ccr/AvinS18,DBLP:journals/dc/GotteHSW23,DBLP:journals/dc/MichailSS22,DBLP:journals/jcss/MertziosMSST25}.

To the best of our knowledge, the paper by Woods \emph{et al.}~\cite{DBLP:conf/innovations/WoodsCGDWY13} is the only study of a distributed model (called the \emph{nubot} model) for exponentially fast size-changing transformations. In terms of its dynamics, their model uses a combination of size-changing and size-preserving rules. An example of the latter is local rotations causing relative movement between agents. The application of rules that would cause collisions or other structural violations is effectively excluded by the model. As an active self-assembly model, it also makes a set of specific molecular modeling assumptions, such as its evolution over time being a continuous-time Markov process (as in stochastic CRNs) and Brownian motion-style agitation. 

Motivated by \cite{DBLP:conf/innovations/WoodsCGDWY13} and \cite{DBLP:journals/jcss/MertziosMSST25}, Almalki and Michail \cite{DBLP:journals/tcs/AlmalkiM24} proposed a centralized geometric model that is more abstract than \cite{DBLP:conf/innovations/WoodsCGDWY13}, aiming to understand what transformations can or cannot be performed (with the main focus being on polylog time) if \emph{only} size-changing rules are available. They studied restricted types of growth dynamics, conditioned to affect specific parts of the shape (e.g., whole columns) and defined in a way that ensures no collisions occur. Almalki \emph{et al.}~\cite{DBLP:conf/algosensors/AlmalkiGM24} generalized this to parallel growth rules that can be applied to any subset of the agents. These are not guaranteed to be collision-free \emph{a priori}, so collision avoidance becomes the responsibility of the algorithm designer.  
They show that some classes of shapes can be grown efficiently (in polylogarithmic time), while others require superpolylogarithmic time.

In this work, we take this investigation further, to a fully distributed setting, remaining focused on understanding the structural and (distributed) algorithmic properties of size-changing dynamics. Communication is a central challenge in taking this step. Indeed, global communication based on classical distributed models such as message passing is exponentially slower than the speed of parallel reconfiguration available, thus cannot be useful. Other issues include the lack of common orientation or chirality, and additional constraints stemming from the decentralization and limited computation and communication resources available to the agents. To address these challenges, we make use, for the first time in this class of problems, of the \emph{reconfigurable circuit} \cite{DBLP:journals/jcb/FeldmannPSD22} communication model.

Parallel transformation processes can be exponentially fast and are highly dynamic, which makes it difficult to avoid structural violations like collisions \cite{DBLP:conf/algosensors/GuptaKMP24}.
It is therefore important to have fast and reliable communication within the system---especially over long distances.

In order to overcome these issues,
Feldmann \emph{et al.} \cite{DBLP:journals/jcb/FeldmannPSD22} proposed the deployment of \emph{reconfigurable circuits} which allow to connect subsets of agents. Each agent can send simple signals (beeps) through any circuit it is connected to, which are instantaneously received by all agents connected to the same circuit. Reconfigurable circuits have been shown to enable polylogarithmic-time solutions to various problems, such as shape recognition/containment \cite{DBLP:journals/corr/abs-2501-16892,DBLP:journals/jcb/FeldmannPSD22}, spanning tree \cite{DBLP:journals/nc/PadalkinSW24}, and shortest path \cite{DBLP:conf/podc/PadalkinS24}. Reconfigurable circuits have so far only been used in the context of polylog-time distributed computations (primarily in the \emph{amoebot} model of programmable matter and for graph problems) and size-preserving reconfiguration, and have not yet been applied to exponentially fast parallel reconfiguration.

In this work, we leverage reconfigurable circuits as a communication model for rapid distributed size-changing transformations. In particular, we study distributed algorithms that must efficiently reduce a given shape into a target shape (sometimes a single node) through shrinking or by combining shrinking and growth, while avoiding collisions. This also offers a new application domain for reconfigurable circuits.

\subsection{Contribution}\label{subsec:cont}
We aim to understand the conditions under which a set of agents (also called \emph{nodes} throughout) can efficiently 
reduce their geometric arrangement to another such arrangement (referred to as \emph{shape} throughout), using either only \emph{shrinking} operations or a combination of \emph{shrinking} and \emph{growth} operations. A set of \emph{shrinking} operations reduces a shape by absorbing nodes into neighboring nodes, while a set of \emph{growth} operations expands a shape by adding nodes adjacent to existing nodes.\footnote{As will become clear, shrinking in this work involves translation, making it fundamentally different from the shrinking rule of \cite{DBLP:conf/innovations/WoodsCGDWY13} which is merely node deletion.} Our transformations result from distributed algorithms executed by limited agents, each of which operates with constant memory and is computationally equivalent to a finite-state machine. For communication, we employ the \emph{reconfigurable circuit} model, as its speed can support the fast dynamics of the transformations under consideration. Reconfigurable circuits enable efficient global coordination by allowing the nodes to exchange limited information instantly (in constant time) across a dynamically changing structure.

The challenge lies in designing efficient distributed algorithms that perform these operations while maintaining structural integrity and avoiding collisions---undesirable structural violations where nodes overlap, intersect, or whose movement breaks the structure. 
Understanding the distributed complexity and developing distributed algorithms for size-changing transformation tasks is important for engineering self-deploying systems and to begin understanding some of the algorithmic properties involved in formation processes of nature. Our work is also relevant to research on extremely weak models of distributed computing, such as population protocols~\cite{DBLP:journals/dc/AngluinADFP06} and variants, beeping models~\cite{DBLP:conf/wdag/CornejoK10}, and programmable matter~\cite{DBLP:conf/spaa/DerakhshandehDGRSS14}. In contrast to the considerable amount of work on size-preserving dynamics (e.g., in reconfigurable robotics), algorithmically controlled size-changing dynamics are less understood and the development of appropriate models is still in its infancy. We aim to take a step in this direction by proposing a novel such model and initiating its study.

We consider connected geometric shapes formed by nodes in a two-dimensional square grid in the \emph{connectivity} graph model, in which edges form only when new nodes are created.
Our goal is to determine complexity bounds and design distributed algorithms for different reduction objectives, aiming to optimize time efficiency without violating structural constraints. Our main results is the \emph{BFS shrinking} algorithm which reduces any tree to a single node in $O(k \log n)$ rounds w.h.p.\footnote{We define with high probability (w.h.p.) as a probability of success of at least $1-1/n^c$, where $c$ is a constant that can be made arbitrarily large.}, where the parameter $k$ represents the number of turning points and $n$ the number of nodes in the initial shape, and the \emph{incompressible tree} algorithm, which reduces any tree to its incompressible form in $O(\log n)$ rounds w.h.p.\ when the incompressible nodes are known in advance, or in $O(k \log n)$ rounds w.h.p.\ otherwise. Moreover, by combining these approaches, we can shrink any tree to a single node through its incompressible tree in $O( \log n+ k\log k)$ rounds. These algorithms use only shrinking operations. The \emph{target tree} algorithm, by combining shrinking and growth operations, ensures the reduction of any tree to any \emph{topologically equivalent} tree in $O(k \log n+ \log^2 n)$ rounds w.h.p. We also give centralized lower bounds (implying equivalent distributed lower bounds): ($i$) There exists an infinite family of pairs of \emph{geometrically equivalent} paths of $k = \Theta(\sqrt[3]{n})$ turning points each such that, if only shrinking is available, $\Omega(k \log k)$ rounds are required to reduce one path to the other.
($ii$) There exists an infinite family of pairs of \emph{geometrically equivalent} paths of $k = \Theta(\log n)$ turning points each such that, if only shrinking is available, $\Omega(\log^2 n)$ rounds are required to reduce one path to the other. In the \emph{adjacency} model, the \emph{shape reduction} algorithm shrinks any shape to a single node in $O(\log n)$ rounds w.h.p. These results are summarized in~\Cref{tab:results}.

\begin{table}[tbp]
    \centering
    \resizebox{\textwidth}{!}{ 
    \begin{tabular}{l|c|
    c}
    
        \hline
        \textbf{Reduction to} & \textbf{Model} 
        & \textbf{Bound} \\
        \hline
        \textit{Single node} & Connectivity Model
        & $O(k \log n)$ w.h.p. \\
        \textit{Single node} 
        (known incompressible nodes) & --
        & $O(\log n + k \log k)$ w.h.p. \\
        \textit{Geometrically equivalent tree} ($k = \Theta(\sqrt[3]{n})$) & --  
        & $\Omega(k\log k)$ \\
        \textit{Geometrically equivalent tree} ($k = \Theta(\log n)$) & --  
        & $\Omega(\log^2 n)$ \\
        \textit{Incompressible tree}  
        & -- 
        & $O(k \log n)$ \\
        \textit{Incompressible tree} (known incompressible nodes) & -- 
        & $O(\log n)$ \\
        \textit{Topologically equivalent tree}  & -- 
        & $O(k \log n + \log^2 n)$ \\
        \hline
         \textit{Single node} & Adjacency Model
         &$O(\log n)$ \\
         \hline
    \end{tabular}
    }
    
    \caption{
        A summary of our results.
        All upper bounds require a preprocessing phase that requires $O(\log n)$ rounds w.h.p.\ that we omit in the bounds.
    }
    \label{tab:results}
\end{table}

The remainder of the paper is organized as follows.
In \Cref{sec:model}, we formally introduce our \emph{distributed} computational model and define the growth and shrinking operations.
In \Cref{sec:problem}, we define the \emph{reduction} problem and various variants that we consider in this work. In \Cref{sec:prelim}, we present key algorithmic primitives of previous work that serve as the basis for our algorithms. In \Cref{sec:connectivity}, we develop distributed algorithms for different shape reduction problems under the \emph{connectivity} model. In \Cref{sec:adjacency}, under the \emph{adjacency} model, we introduce an algorithm, which is based on a distributed simulation of a centralized universal transformation of~\cite{DBLP:conf/algosensors/AlmalkiGM24}. In \Cref{sec:conclusion}, we discuss potential directions for future research, including optimizing our bounds and extending our methods to other models and problems.

\subsection{Other Related Work}
\label{sec:related_work}
Algorithmic self-assembly is another area in which shape formation and related dynamics naturally arise (see, e.g., Doty~\cite{DBLP:journals/cacm/Doty12}). 
The \emph{nubot} model~\cite{DBLP:conf/innovations/WoodsCGDWY13} incorporates active molecular dynamics to self-assembly, allowing the construction of two-dimensional geometric shapes in polylogarithmic time. The authors first show how to grow fundamental structures such as chains and squares. By combining
further growth and other forms of reconfiguration, they extend these structures to obtain arbitrary shapes exponentially fast. While conceptually related to our model, a difference lies in the allowed operations, as our model only includes size-changing dynamics 

The amoebot model has been also studied as a framework for programmable matter, allowing autonomous agents to self-organize into complex configurations~\cite{DBLP:journals/dc/DaymudeRS23,DBLP:conf/spaa/DerakhshandehDGRSS14}. This model consists of simple computational units, called amoebots, that interact locally while collectively achieving global transformations. It has since been extended with more advanced communication mechanisms and structural capabilities.
One major advancement in the amoebot model is the introduction of \emph{reconfigurable circuits} by Feldmann \textit{et al.}~\cite{DBLP:journals/jcb/FeldmannPSD22}, which enable efficient communication and synchronization among nodes. These circuits allow for the instantaneous transmission of signals across dynamically formed connections.
Recent work on shortest path computation within the amoebot model has demonstrated the efficiency of reconfigurable circuits in facilitating rapid communication and coordination~\cite{DBLP:conf/podc/PadalkinS24}.  
Graphical Reconfigurable Circuits (GRCs) generalize the amoebot communication model to arbitrary graph topologies, enabling scalable and efficient distributed computations~\cite{DBLP:conf/wdag/EmekGH24}. Additionally, recent studies have extended the amoebot model to support parallel reconfiguration~\cite{DBLP:conf/sand/Padalkin0S24} which aligns with our study of growing shape and shrinking as a method for rapid programmable matter deployment before refining configurations via more complex operations. 

The problem of growing and transforming graphs has also been explored in different computational models. For example,~\cite{DBLP:journals/jcss/MertziosMSST25} studies fundamental growth processes in abstract graphs. Also,~\cite{DBLP:conf/algosensors/GuptaKMP24}  investigates collision detection in modular robots, examining the difficulty of predicting and preventing collisions under various growth and contraction operations. Related work~\cite{DBLP:journals/tcs/AlmalkiM24} explores constrained forms of expansion where growth operations are restricted to specific parts of the structure, ensuring that no collisions occur by design.

\section{Models and Problems}\label{sec:model}
\tcbset{colback=red!5!white, colframe=red!75!black, fonttitle=\bfseries}

We study a system composed of computational agents (we refer to the agents as \emph{nodes} for the rest of the paper) that initially form a connected shape $S=(V, E)$ of size $|V|= n$ on a two-dimensional square grid. Each node $u \in V$ occupies a distinct point $(u_x,u_y)$ of the grid. We adopt the model of~\cite{DBLP:conf/algosensors/AlmalkiGM24}, and we consider two graph models for connectivity among nodes. The edge set $E$ is taken as a subset of the potential edges between adjacent nodes, that is, $E \subseteq E'$ where $E'=\{uv\;|\; u,v\in V_r \text{ and } u,v \text{ are adjacent}\}$. Two nodes $u=(u_x, u_y)$ and $v=(v_x, v_y)$ are \emph{adjacent} if $u_x\in\{v_x-1,v_x+1\}$ and $u_y=v_y$ or $u_y\in\{v_y-1,v_y+1\}$ and $u_x=v_x$, that is, their orthogonal distance on the grid is one. In the connectivity model, $E$ is the set of edges in the shape $S$, while in the adjacency model, an additional update is performed at the end of every round: the graph is replaced by its adjacency closure defined formally as $AC(S)=(V, E')$, so that every pair of nodes that are adjacent on the grid is connected. We illustrate nodes as maximal circles inscribed within grid cells to represent their positions; however, our results hold for any geometry of individual nodes that does not trivially make nearby nodes intersect. In other words, while our model allows for an arbitrary geometry of individual nodes, we require that adjacent nodes are arranged so that their boundaries do not intersect.
A shape is \emph{connected} if the graph that defines it is a connected graph. Throughout the paper, we restrict attention to connected shapes.

Nodes operate as anonymous (and randomized) finite-state automata,
executing actions in discrete synchronous rounds. Each node has a compass orientation and a chirality.
Initially, the nodes may not agree on a common compass orientation or chirality.

The nodes are able to communicate through \emph{reconfigurable circuits} as follows \cite{DBLP:journals/jcb/FeldmannPSD22}.
Each edge in $E$ consists of a constant number $c$ of \emph{links}.
The constant $c$ is equal for all links.
In this paper, we will assume $c = 2$ which is sufficient and necessary for all results.
Each incident node owns one endpoint of each link which we call \emph{pins}.
Each node labels its pins of each incident edge from $1$ to $c$.
If the two nodes incident to the same edge share a chirality, then the pin with label $i$ is connected to a pin with label $c-i+1$.
Otherwise, the pin labels are matching, i.e., each pin with label $i$ is connected to a pin with label $i$. Each node partitions its pins to disjoint \emph{partition sets}.
This partitioning is called the \emph{circuit configuration} of the node.
Let $P$ denote the set of all partition sets of all nodes.
Consider the graph $C = (P, L)$, where two partition sets in $P$ are connected if they share pins of the same link.
We call the connected components of $C$ \emph{circuits}. In each round, each node can reconfigure its circuit configuration, i.e., it can change its partitioning, and send a simple signal (a \emph{beep}) on an arbitrary number of its partition sets.
This beep is received by all nodes connected to the same circuit at the beginning of the next round.
However, they do not obtain any information about the origin of the beeps or the number of beeps in case multiple nodes decide to beep on the same circuit.

After describing the static structure of the system and the communication process, we now define the operations that nodes can perform. In each round, nodes process incoming signals, transition to new states accordingly, and may execute one of the two operations---growth or shrinking---or take no action and remain in their current state.
A \emph{growth} operation applied on a node $u$ generates a new node in one of the points adjacent to $u$ and possibly translates some part of the shape. A \emph{shrinking} operation, on the other hand, is applied on a node $u$ to absorb an adjacent node $v$, and possibly translate $v$'s neighbors towards $u$. In this context, translation is the outcome of the applied operation, where the addition of new nodes or removal of existing nodes within the shape pushes or pulls its parts, causing a shift in position. One or more growth or shrinking operations applied in parallel to nodes of a shape $S$ either cause a \emph{collision} or yield a new shape $S'$ (see~\Cref{fig:operations}). There are two types of collisions: \emph{node collisions} and \emph{cycle collisions}. When describing the outcome of these operations, we assume there is an \emph{anchor} node which is stationary, and other nodes move relative to it. This is without loss of generality (w.l.o.g.) under the assumption that the constructed shapes are equivalent up to translations.

 \begin{figure}[tbp]
    \centering
    \includegraphics[scale=1.5]{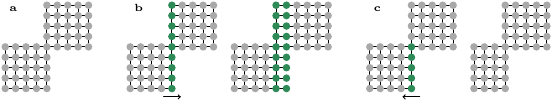}
    \caption{(a) An initial shape $S_I$. (b) Left: green nodes apply a growth operation each adding a new adjacent node (a copy of the green node). Right: the result after the growth translates the second part of the shape one unit to the east. (c) Left: green nodes apply a shrinking operation, each being absorbed and removed by their adjacent node in the shrinking direction. Right: the result after the shrinking translates the second part one unit to the west.}
    \label{fig:operations}
\end{figure}

Temporarily disregarding collisions, we begin by defining the effect of a single operation on a tree shape---first for growth, then for shrinking. Collisions will be handled later for the general case, where one or more operations are applied in parallel. Let $T=(V,E)$ be a tree and $u_0\in V$ its anchor. We set $u_0$ to be the root of $T$. A single growth operation is applied on a node $u\in V$ toward a point $(x,y)$ adjacent to $u$. 
There are two cases, which are free from collisions: ($i$) there is no edge between $u$ and $(x,y)$, ($ii$) $(x,y)$ is occupied by a node $v$ and $uv\in E$. We first define the effect in each case. In case~($i$), the growth operation generates a node $u'$ at point $(x,y)$ and connects it to $u$. In case~($ii$), assume w.l.o.g. that $u$ is closer to $u_0$ in $T$ than $v$.
Let $T(v)$ denote the subtree of $T$ rooted at node $v$. Then, the operation generates a node $u'$ between $u$ and $v$, connected to both, which translates $T(v)$ by one unit away from $u$ along the axis parallel to~$uv$. After this, $u'$ occupies $(x,y)$ and $uv$ has been replaced by $\{uu',u'v\}$. 
We refer to a \emph{neighbor handover} as the process by which a node $u$ transfers a neighbor $w$, located perpendicular to the direction of an operation (e.g., growth), to another node $u'$.
This happens by a unit translation of $T(w)$ or $T(u)$ along the axis parallel to~$uu'$, depending on which of $u,w$, respectively, is closer to $u_0$ in $T$.

We now define the effect of a single shrinking operation on a tree shape. A single \emph{shrinking} operation is applied on a node $u$ 
to absorb an adjacent node $v$, located either horizontally or vertically next to $u$. Assume that $u$ is at $(u_x,u_y)$ and $v$ at $(v_x,v_y)$. 
Let $\langle d_x,d_y\rangle= \langle x_u-x_v ,y_u-y_v\rangle \in \{ \langle \pm1, 0 \rangle, \langle 0, \pm1 \rangle \}$ denote the unit vector from $v$ toward $u$.
There are two cases: ($i$) if $v$ has no neighbors (other than $u$), the operation removes $v$ and deletes the edge $uv$; ($ii$) if $v$ has neighbors in the direction orthogonal to or opposite of $\langle d_x,d_y\rangle$, each of these neighbors is translated by one unit in the direction $\langle d_x,d_y\rangle$ pulling them towards $u$. For example, if $v$ is the east or west of $u$ in a horizontal shrinking (i.e., $d_y=0$) where a node at $(v_x,v_{y+1})$ is moved to $(v_x+d_x,v_{y+1})$, one at $(v_{x+1},v_y)$ is moved to $(v_{x+1}+d_x,v_y)$, and one at $(v_x,v_{y-1})$ is moved to $(v_x+d_x,v_{y-1})$. These translated nodes are then connected to $u$, replacing their previous connection to $v$. 
The shrinking operation is applied only when all target positions for the translated nodes are unoccupied, ensuring the process is collision-free.

Let $Q$ be a set of operations to be applied \emph{in parallel} to a connected shape $S$, each operation on a distinct pair of nodes (for shrinking and some growth cases) or on a node and an unoccupied point (only for growth). 
We assume that all operations in such a set are applied \emph{concurrently}, have the same \emph{constant execution speed}, and their \emph{duration} is equal to one \emph{round}. Let again $T=(V,E)$ be a tree, $u_0\in V$ its anchor, and set $u_0$ to be the root of $T$. We want to determine the displacement of every $v\in V\setminus\{u_0\}$, as well as any newly generated node from growth, resulting from the parallel application of operations in $Q$. As $u_0$ is stationary, and each operation either pushes or pulls part of the shape—translating nodes away from or toward the acting node depending on whether the operation is a growth or shrinking—the displacement of each $v$ is determined only by the operations along the unique $u_0v$ path in the tree.
In particular, any such operation contributes one of the unit vectors $\langle \pm1, 0 \rangle$ or $\langle 0, \pm1 \rangle$ to the motion vector $\vec{v}$ of~$v$. We can use the set of motion vectors to determine whether the trajectories of any two nodes will collide at any point. This type of collision is called a \emph{node collision} (see~\Cref{fig:node-collision}).

\begin{figure}[htbp!]
    \centering
    \includegraphics[scale=.6]{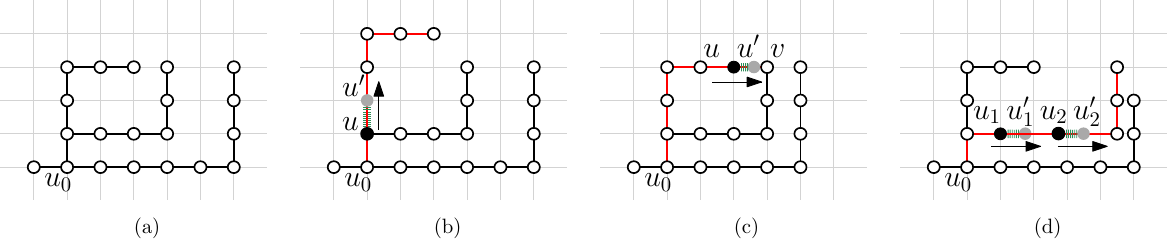}
    \caption{The figure, originally from~\cite{DBLP:journals/tcs/AlmalkiM24}, illustrates examples of node collisions. (a) Initial tree $T$. (b) The tree $T'$ after a growth operation on node $u$ moves north to generate a new node $u'$ without any collision. (c) Illustration of a node collision scenario: $T'$ here is a result of a growth operation applied on node $u$ toward the east, where a node $v$ already exists and $uv\notin E$. The newly generated node $u'$ occupies the same position as $v$, leading to a node collision. (d) Another scenario of a node collision: two nodes $u_1$ and $u_2$ simultaneously grow in the east direction and generate $u_1'$ and $u_2'$, though $u_1'$ and $u_2'$ do not collide directly, their growth pushes their branch into an adjacent branch, leading to a collision. These scenarios are analogous in the context of shrinking operations.}
    \label{fig:node-collision}
\end{figure}

Let $S$ be any connected shape with at least one cycle and any node $u_0$ be its anchor. Then, 
a set of parallel operations $Q$ on $S$ either causes a \emph{cycle collision} or its effect is essentially equivalent to the application of $Q$ on any spanning tree of $S$ rooted at $u_0$. 
Let $u$, $v$ be any two nodes on a cycle. If $p_1$ and $p_2$ are the two paths between $u$ and $v$ of the cycle, then $\vec{v}_{p_1}=\vec{v}_{p_2}$ must hold:
the displacement vectors along the paths $p_1$ and $p_2$ are equal. 
Otherwise, we cannot maintain all nodes or edges of the cycle. Such a violation is called a \emph{cycle collision} and an example is shown in~\Cref{fig:cycle-collision}. 
A set of operations is said to be \emph{collision-free} if it does not cause any node or cycle collisions.
\begin{figure}[htbp!]
    \centering
    \includegraphics[scale=0.6]{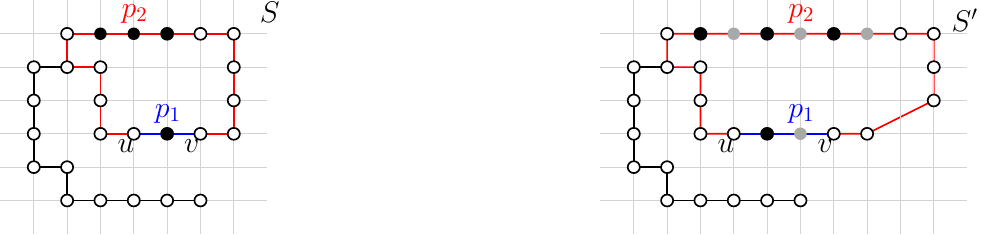}
    \caption{This figure, as presented in~\cite{DBLP:conf/algosensors/AlmalkiGM24}, shows an example of a cycle collision within the shape $S$ due to unequal displacement vectors along the two paths $p_1$ and $p_2$, thus, $\vec{v}_{p_1}\neq \vec{v}_{p_2}$. In particular, the number of generated nodes (gray nodes) along $p_2$ is greater than that along $p_1$. This difference in the number of generated nodes leads to a collision within the cycle, indicating an irregularity in the shape $S'$.
    }
    \label{fig:cycle-collision}
\end{figure}

A growth (shrinking) process $\sigma$ starts from an initial shape $S_I$ and
in each round $r\geq 1$, applies a set of parallel growth or shrinking operations---possibly a single operation---on the current shape $S_{r-1}$ to give the next shape $S_r$, until a final shape $S_F$ is reached at a round $r_f$. In the case of a growth process, we say that, \emph{$\sigma$ grows $S_F$ from $S_I$ in $r_f$ rounds}; in the case of a shrinking process, we say that \emph{$\sigma$ shrinks $S_I$ to $S_F$ in $r_f$ rounds}. We also assume that parallel operations have the same cardinal direction and that a node gets at most one operation per round. These assumptions simplify the description of algorithms and can be easily dropped.

A node $u \in V$ of a shape $S$ is a leaf if and only if its degree is one, i.e., $\delta(u) = 1$. Let $L \subseteq V$ denote the set of all leaves.
A node $u \in V$ of a shape $S$ is called a \emph{turning point}
if either $u \in L$ or if there are two neighbors $v_1$ and $v_2$ of $u$ such that $v_1u$ is perpendicular to $uv_2$. For example, in the special case of a path shape $P=( u_1, u_2, \ldots, u_n )$, a node $u_i$ is a turning point if either $i \in \{1,n\}$ or $u_{i-1}u_i$ is perpendicular to $u_iu_{i+1}$. For uniformity of our arguments, we add the endpoints of a path to the set of its turning points. We denote the set of all turning points by $TP \subseteq V$, and by definition, $L \subseteq TP$.
A node $u_j \in P$ is a segment node if it lies between any two consecutive turning points and satisfies $\delta(u) = 2$. 
The sequence of segment nodes between two turning points forms a segment $s$ of length $\ell$, where $\ell$ is the number of nodes in the segment, including both turning points. 

We define the \emph{relative position} of $v = (v_x, v_y)$ with respect to $u = (u_x, u_y)$ by $r(u,v) = (r_H(u,v), r_V(u,v)) \in \{ \rightarrow, 0, \leftarrow\} \times \{ \uparrow, 0, \downarrow\}$ such that, $r_H(u,v) = \{ \rightarrow \text{ if } u_x < v_x, \; 0 \text{ if } u_x = v_x, \; \leftarrow \text{if } u_x > v_x \}$,  and  $r_V(u,v) = \{ \uparrow \text{ if } u_y < v_y, \; 0 \text{ if } u_y = v_y, \; \downarrow \text{ if } u_y > v_y \}$. 
We call any two trees $T$ and $T^{\prime}$ \emph{geometrically equivalent} if and only if there exists a one-to-one correspondence between their turning points such that only the segment lengths and the relative positions of the turning points may differ. We call two \emph{geometrically equivalent} trees $T$ and $T^{\prime}$ additionally \emph{topologically equivalent} if and only if the relative positions of all corresponding turning points are the same. 

A \emph{column} or \emph{row of a shape} $S$ refers to a row or column of the grid that contains at least one node of $S$. We call a column or row of a shape $S$ \emph{compressible} if all nodes in that row or column are segment nodes, i.e., the row or column contains no turning points. Otherwise, it is \emph{incompressible}. We call the nodes of incompressible columns or rows \emph{incompressible nodes}. We define the incompressible form of a shape $S$, denoted by $i(S)$, as the shape obtained by removing all compressible rows and columns from $S$. The resulting shape is incompressible and therefore topologically equivalent to $S$~\cite{DBLP:conf/algosensors/AlmalkiGM24}.

\subsection{Problems}
\label{sec:problem}

We study the problem of reducing an initial shape $S_I = (V_I, E_I)$ to a target shape $S_F = (V_F, E_F)$, where $|V_F|\leq |V_I|$. We assume that the constructed shapes are equivalent up to translations. In what follows, we restrict attention to $S_I$ and $S_F$ being trees, denoted $T_I$ and $T_F$, respectively. Throughout, $n$ denotes the number of nodes and $k$ the number of turning points in $T_I$. In general, the system must reduce itself from an initial shape $S_I = (V_I, E_I)$ to a given target shape $S_F = (V_F, E_F)$, where $|V_F|\leq |V_I|$. For any pair of geometrically or topologically equivalent trees $T_I,T_F$, we assume that $s_i$ denotes the same segment with lengths $\length{s^I_i}$ and $\length{s^F_i}$ in $T_I$ and $T_F$, respectively. For such trees, the input convention assumes that in $T_I$ each segment $s^I_i$ stores a binary representation of $\length{s^I_i}$ and $\length{s^F_i}$.
We study the following problems in the connectivity model, apart from \textsc{Single Node Reduction} which we also study in the adjacency model.

\smallskip

\noindent\textsc{Single Node Reduction}.
Let $T_I$ be an initial tree.
The system must reduce itself from $T_I$ to $T_F = (\{u_0\}, \emptyset)$ while avoiding collisions. In the adjacency model, the initial shape is instead any connected shape. 

\smallskip

\noindent\textsc{Geometrical Reduction}.
As in \textsc{Single Node Reduction}, with $T_F$ now being a tree geometrically equivalent to $T_I$, satisfying $\length{s^F_i}\leq\length{s^I_i}$ for all segments $s_i$. 

\smallskip

\noindent\textsc{Topological Reduction}.
As in \textsc{Geometrical Reduction}, with $T_F$ additionally being topologically equivalent to $T_I$.

\smallskip

\noindent\textsc{Incompressible Reduction}.
A restricted version of \textsc{Topological Reduction}, where $T_F = i(T_I)$. 

\smallskip

We also consider a variant of \textsc{Single Node Reduction} and \textsc{Incompressible reduction} where each node additionally knows in advance if it is an incompressible node.

\section{Preliminaries}
\label{sec:prelim}

In this section, we highlight fundamental primitives presented in previous work, which form the basis for our proposed algorithms. Feldmann \emph{et al.} \cite{DBLP:journals/jcb/FeldmannPSD22} have proposed algorithms that allow us to elect a unique leader node, to align the compass orientation of all nodes, and to obtain an agreement about the chirality of all nodes.
Each of these algorithms requires $O(\log n)$ rounds w.h.p. Our algorithms utilize (some of) these algorithms in a preprocessing step which we will explicitly state for each algorithm.

In case we perform some subroutine with varying runtimes in different parts of the shape in parallel, we use the synchronization mechanism of \cite{DBLP:journals/nc/PadalkinSW24} to keep them synchronized. This only adds a constant factor to the runtime.

Let $(x_{b-1}, \dots, x_0)$ be the binary representation of value $x$.
We store $x$ in a segment $s = \{u_0, \dots, u_{m-1}\}$ of nodes where $m \geq b/c$ for a constant $c$ such that for each $i \in \{ 0, \dots, b-1 \}$, $x_i$ is stored by $u_{\lfloor i/c \rfloor}$.
We can \emph{transfer} the value to another segment $s' = \{v_0, \dots, v_{\ell'-1}\}$ with $\ell' \geq b/c$ in $O(b)$ rounds \cite{DBLP:journals/nc/PadalkinSW24}.
If a segment holds multiple values, it can perform basic arithmetic and other operations.
Addition, subtraction, and comparisons require $O(1)$ rounds while multiplication and division require $O(b)$ rounds where $b$ denotes the number of bits of the larger value \cite{DBLP:journals/corr/abs-2501-16892}.

Another powerful primitive is the \emph{primary and secondary circuit} (PASC) algorithm of Padalkin \emph{et al.} \cite{DBLP:journals/nc/PadalkinSW24} (a prototype was published by Feldmann \emph{et al.} \cite{DBLP:journals/jcb/FeldmannPSD22}).
Let $s = \{u_0, u_1, \ldots, u_{m-1}\}$ be a segment of $m$ nodes.
We assume that the segment is ordered (e.g., by a cardinal direction $\vec d \in \{ \rightarrow, \uparrow, \leftarrow, \downarrow \}$) and that each node knows its predecessor and successor with respect to that order (if they exist, respectively).
Let $u_r$ be an arbitrary node of $s$.
We call $u_r$ the \emph{reference node}.
The PASC algorithm computes $j - r$ for each node $u_j$.
Since in general, a node $u_j$ does not have enough memory space to store $j - r$, it will compute it iteratively bit by bit.
We will describe the primitive to a detail sufficient to understand the results of this paper.

In each iteration, the nodes establish two circuits along the segment as follows.
Each node utilizes two partition sets: a \emph{primary partition set} and a \emph{secondary partition set}.
Each node is either in an \emph{active} or \emph{passive} state.
Initially, each node is active.
If a node is active, it connects its primary (secondary) partition set to the secondary (primary) partition set of its predecessor.
If a node is passive, it connects its primary (secondary) partition set to the primary (secondary) partition set of its predecessor.
We obtain two circuits along the segment (see \Cref{fig:pasc}). Now, the reference node $u_r$ beeps on its primary partition set.
Each node receives that beep either on its primary or secondary partition set.
Each active node that receives a beep on its secondary partition set becomes passive in the next iteration.
We terminate if there is no such node.

By interpreting a beep on the primary partition set as a $0$ and a beep on the secondary partition set as a $1$, each node $u_j$ computes the two's complement representation of $j - r$.
More precisely, in the $i$-th iteration of the PASC algorithm, each node $u_j$ computes the $i$-th bit of $j - r$.
We refer to \cite{DBLP:journals/nc/PadalkinSW24} for the proof and more details.
Note that the last node $u_{m-1}$ computes $m - 1$ if we choose the first node $u_0$ as the reference node $u_r$.
Further, each node $u_j$ can perform arithmetic operations and comparisons with values of constant size or with other values if we broadcast them at the same time. We obtain the following lemma.

\begin{lemma}[Feldmann \emph{et al.} \cite{DBLP:journals/jcb/FeldmannPSD22}, Padalkin \emph{et al.} \cite{DBLP:journals/nc/PadalkinSW24}]
\label{lem:pasc:segment}
    Let $s$ be a segment of $m$ nodes.
    The PASC algorithm computes the length of $s$ and stores it within $s$ in $O(\log m)$ rounds.
    Given an $m' \leq m$, the PASC algorithm also identifies the first $m'$ nodes of $s$ in $O(\log m)$ rounds.
\end{lemma}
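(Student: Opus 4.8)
The plan is to establish the lemma in two parts, both relying on the description of the PASC algorithm given above. The key quantity is that for a segment $s = \{u_0, \dots, u_{m-1}\}$ ordered along a cardinal direction $\vec{d}$, running PASC with reference node $u_r = u_0$ causes each $u_j$ to compute $j - r = j$ bit by bit, and in particular the last node $u_{m-1}$ ends with the two's complement representation of $m-1$. First I would observe that the number of iterations is $O(\log m)$: each iteration either terminates (no active node receives a beep on its secondary partition set) or turns at least the ``boundary'' active nodes passive; more precisely, after iteration $i$, all nodes $u_j$ with the $i$-th bit of $j$ already determined stabilize, and since $j \le m-1$ has at most $\lceil \log m \rceil$ bits, the process halts after $O(\log m)$ rounds. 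Each iteration costs $O(1)$ rounds, since establishing the two circuits, beeping from $u_r$, and the state update are all constant-round operations in the reconfigurable-circuit model.

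For the first statement, I would note that computing the length of $s$ amounts to computing $m$, which is $(m-1) + 1$. The node $u_{m-1}$ obtains $m-1$ directly from PASC, and then the value $m$ can be formed and stored within $s$ using the bit-storage and arithmetic primitives described earlier (addition of a constant requires $O(1)$ rounds, and transferring/storing a $b = O(\log m)$-bit value across the segment is within the stated $O(\log m)$ budget). Here I would be slightly careful that ``stores it within $s$'' is consistent with the convention that a segment of $m \ge b/c$ nodes can hold a $b$-bit value; since $m$ has $O(\log m)$ bits and $s$ has $m$ nodes, there is ample room, so the storage step is absorbed into the $O(\log m)$ bound.

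For the second statement, given $m' \le m$, the goal is to identify the first $m'$ nodes. The idea is to broadcast the binary representation of $m'$ (or $m'-1$) along the segment while each $u_j$ holds, bit by bit, its own index $j$ from the PASC computation, and to have each node perform the comparison $j < m'$ (equivalently $j \le m'-1$). Since PASC produces the bits of $j$ iteratively from least to most significant, and comparisons against a broadcast value of the same bit-width take $O(1)$ rounds per the arithmetic primitive noted above, each $u_j$ can decide membership in the first $m'$ nodes within $O(\log m)$ total rounds. The node concludes it belongs to the prefix exactly when $j < m'$.

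I expect the main obstacle to be the bookkeeping of the bit-by-bit comparison rather than any deep difficulty: one must argue carefully that the comparison $j < m'$ can be resolved as the bits of $j$ arrive, synchronized with the broadcast bits of $m'$, without requiring a node to store more than a constant number of bits at once. This is a standard most-significant-bit-wins comparison that can be streamed, but it is the step where one must invoke the constant-memory constraint of the finite-state nodes together with the arithmetic primitive. Since the detailed correctness of PASC itself is cited to \cite{DBLP:journals/nc/PadalkinSW24}, I would treat the per-node index computation as given and focus the argument on the $O(\log m)$ iteration count and the constant-round cost of each auxiliary comparison/arithmetic step.
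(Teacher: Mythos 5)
Your proposal is correct and follows essentially the same route as the paper, which itself only sketches the PASC mechanism in Section~3 and defers the detailed correctness to the cited works: the $O(\log m)$ iteration count from the bit-length of the indices, the last node obtaining $m-1$ with $u_0$ as reference, and the prefix identification via a streamed comparison of $j$ against a broadcast $m'$ all match the paper's stated primitives (in particular its remark that nodes can compare their PASC value against a simultaneously broadcast value). No gaps.
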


Let $u = (u_x, u_y)$ and $v = (v_x, v_y)$ be two nodes.
Let $|u_x - v_x|$ ($|u_y - v_y|$) denote the \emph{horizontal (vertical) distance} of $u$ to $v$.
The PASC algorithm can be adapted to compute the horizontal (vertical) distance of each node $u_j$ to a reference node $u_r$.
We refer to \cite{DBLP:journals/nc/PadalkinSW24} for more details.

\begin{lemma}[Padalkin \emph{et al.} \cite{DBLP:journals/nc/PadalkinSW24}]
\label{lem:pasc:spatial}
    Let $S$ be a shape of $m$ nodes and $u_r$ an arbitrary \emph{reference node} of $S$.
    The spatial PASC algorithm computes the sign of the horizontal (vertical) distance of $u_r$ to each node $u_j$ and stores it in $u_j$ in $O(\log m)$ rounds.
\end{lemma}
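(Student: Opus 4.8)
The plan is to reduce the spatial statement to the one-dimensional segment version already established in \Cref{lem:pasc:segment}. The key observation is that the horizontal distance $|u_x - v_x|$ between two nodes depends only on their $x$-coordinates, and two nodes share the same $x$-coordinate exactly when they lie in a common column of the shape. So I would first group the nodes of $S$ by column: all nodes with the same $x$-coordinate form one group, and these groups are linearly ordered from west to east. If I can compute, for each node, the sign of the horizontal distance of its own column to the column of the reference node $u_r$, I am done, since every node in a column inherits the same horizontal offset sign.

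First I would build a circuit structure that behaves like a single segment whose ``nodes'' are the columns of $S$. Concretely, within each column I would connect all nodes into one partition set so that an entire column acts as a single super-node on the shared primary/secondary circuits; between consecutive columns I would wire the primary and secondary partition sets exactly as the segment PASC algorithm prescribes, using one representative per column to carry the predecessor/successor links in the east--west direction. This requires the nodes to agree on the horizontal (east--west) direction, which is available after the compass-alignment preprocessing of \cite{DBLP:journals/jcb/FeldmannPSD22}. Establishing intra-column connectivity and the inter-column predecessor/successor relation is a constant-round local task once orientation is fixed.

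With this ``column segment'' in place, I would run the PASC algorithm of \Cref{lem:pasc:segment} with the column containing $u_r$ as the reference. By the correctness of PASC, each column $j$ computes the two's-complement representation of its signed offset $j - r$ relative to the reference column; in particular the leading (sign) bit tells each node whether its column lies east or west of $u_r$'s column (or coincides with it, when the offset is zero). I would then broadcast this sign within each column so that every individual node $u_j$ stores the sign of its horizontal distance to $u_r$. The vertical case is entirely symmetric, grouping by rows and orienting north--south instead. Since PASC on a segment of up to $m$ super-nodes runs in $O(\log m)$ rounds, and the grouping, broadcasting, and sign extraction each cost only $O(1)$ or $O(\log m)$ rounds, the total is $O(\log m)$.

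The main obstacle I anticipate is not the asymptotic analysis but the circuit engineering that makes a full column behave as one PASC node: I must ensure that beeps propagate correctly along the east--west circuit despite each column containing arbitrarily many grid nodes that may themselves be irregularly arranged, and that the active/passive state of a column is consistently maintained across all its nodes from iteration to iteration. Handling columns that are not contiguous in the shape, or that connect to their east/west neighbors through only some of their nodes, is the delicate part; I would address it by designating a single representative per column to drive the inter-column links while using an internal intra-column circuit to synchronize state and rebroadcast the received beep, so that the reduction to the one-dimensional \Cref{lem:pasc:segment} goes through cleanly.
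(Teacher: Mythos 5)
The paper does not actually prove this lemma---it is imported wholesale from Padalkin \emph{et al.}~\cite{DBLP:journals/nc/PadalkinSW24}, with only a sketch of the column/row wiring appearing later in \Cref{sec:adjacency}. Your reduction of the spatial statement to the one-dimensional \Cref{lem:pasc:segment}, with columns playing the role of PASC nodes, is the right intuition and matches that sketch in spirit. However, there is a genuine gap in the step you yourself flag as ``the delicate part'': you propose to make each column behave as one super-node by connecting all of its nodes into a common partition set and by letting an internal intra-column circuit synchronize the column's active/passive state. This is not implementable in general, because in the connectivity model the nodes of a column need not induce a connected subgraph of $S$. The spiral-shaped trees of \Cref{the:lowerBound}, and more generally the trees on which the paper applies this lemma (e.g., in \Cref{lem:compressible_segments}), routinely have columns whose nodes lie on several horizontal segments at different heights, joined only through long detours across many other columns. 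For such a column there is no intra-column circuit, no representative that can ``drive'' the inter-column links on behalf of the whole column, and no constant-round way to agree on a shared state---so your reduction to a clean segment of $O(m)$ super-nodes does not go through.

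The actual spatial PASC sidesteps this entirely: no column-wide agreement is ever negotiated. Each node locally classifies each incident edge as \emph{crossing} a column boundary (east/west) or \emph{staying} within the column (north/south), wires primary-to-primary and secondary-to-secondary across staying edges, and applies the usual active/passive crossing pattern only across crossing edges. Consistency within a column is then a consequence of geometry rather than communication: along any path between two nodes of the same column, the signed number of eastward boundary crossings is zero, so both nodes necessarily receive the reference beep on the same (primary or secondary) circuit and independently make the same passive-transition decision in the same iteration. If you replace your super-node construction with this per-edge, per-node rule and argue consistency via the path-invariant, your argument closes; the $O(\log m)$ round bound then follows exactly as in your write-up.
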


\begin{figure}[htbp!]
    \centering
    \includegraphics[scale=1]{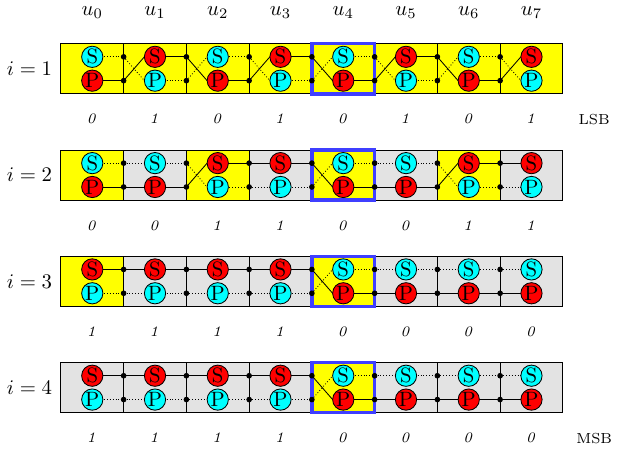}
    \caption{ Example of an execution of the PASC algorithm.
    Note that the nodes are depicted as rectangles and the external links are reduced to a single nodes.
    The reference node $u_r = u_4$ is marked by a blue border.
    The yellow nodes are active while the gray nodes are passive.
    The partition sets marked by a P (S) are the primary (secondary) partition sets.
    The nodes receive the beep on the red partition sets.
    Figure adapted from \cite{DBLP:journals/nc/PadalkinSW24}.
    }
    \label{fig:pasc}
\end{figure}

\section{Connectivity Model}\label{sec:connectivity}

We start our examination in the connectivity model.
In each of the following subsections, we study one of the four reduction problems: \textsc{Single Node Reduction}, \textsc{Geometrical Reduction}, \textsc{Incompressible Reduction}, and \textsc{Topological Reduction}.

\subsection{Single Node Reduction
}\label{subsec:tree:single:node}
In this section, we present the \emph{BFS shrinking} algorithm, which reduces any arbitrary tree $T$ to a single node. We assume common chirality or establish it in a preprocessing phase (see \Cref{sec:prelim}). Note that we do not apply the synchronization mechanism from~\cite{DBLP:journals/nc/PadalkinSW24}, as operations progress uniformly across the shape. The following presents this in detail.

\medskip
\noindent\textbf{BFS shrinking.} This algorithm consists of three subroutines, \emph{segment detection}, \emph{segment coloring and shrinking}, and \emph{final segment shrinking}. 
First, we identify segments starting from the leaves and progressing toward their turning points. Once a segment is identified, it proceeds to \emph{segment coloring and shrinking} subroutine, where it is colored and shrunk by parity-based operations. 
This recursive process continues until $T$
reduces to a single node or segment. In the latter case, the \emph{final segment shrinking} subroutine handles simultaneous leaf formation at both endpoints. The algorithm completes in
$O(k\log n)$ rounds.

\medskip
\noindent\emph{Segment detection.} 
This subroutine allows nodes to detect leaves in their segment and determine their direction by establishing two parallel circuits along each segment $s_i$. For simplicity, we assume all nodes share the same chirality (clockwise or counterclockwise). Otherwise, we establish this assumption in a preprocessing phase (see \Cref{sec:prelim}).

Each node $u \in s_i$ has two disjoint partition sets $P_1(u)$ and $P_2(u)$ where $P_1(u)$ connects the first pin on one side to the second pin on the opposite side, and $P_2(u)$ connects the second pin on the former side and to the first pin of the latter side. This results in two parallel circuits along each segment. It is important to note that each segment's circuit operates independently. The leaf node sends a signal (beep) on its second pin (see \Cref{fig:segment_detection}), this beep propagates through the segment, and each node $u\in s_i$ receives the signal on one of its partition sets. If the signal is received on the first pin of a partition set $P_i(u)$, that pin points toward the leaf. If the signal is received on the second pin $P_i(u)$, that pin points to the opposite end of the segment.

Now, there are three possible scenarios for each segment $s_i$.
First, if there is no signal (beep) in the segment $s_i$, neither endpoints of $s_i$ is a leaf. In this case, the segment remains inactive, waiting for one of its endpoints to become a leaf. Second, if there is a beep on only one circuit, exactly one endpoint is a leaf. The segment then proceeds to the \emph{segment coloring and shrinking} subroutine. 
Third, if there is a beep on both circuits (i.e., two leaves send signals simultaneously), both endpoints of the segment are leaves. In this case, we proceed with the \emph{final segment shrinking } subroutine where the leaves perform a leader election.

\begin{lemma}\label{lem:tree-orientation}
    Each segment agrees on a common orientation once one of its endpoints is a leaf. This process requires $O(1)$ rounds for all segments.
\end{lemma}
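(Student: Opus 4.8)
The plan is to show that a single beep, sent instantly along one of the two circuits established in the segment detection subroutine, lets every node of a segment identify which of its two sides points toward the leaf, and that this determination is globally consistent along the segment. Here ``common orientation'' means that all nodes agree on the leaf-to-other-endpoint direction, so that the segment becomes an ordered segment of the kind required by the PASC primitive. Since the reconfigurable circuit model delivers a beep to every node on a circuit within one round, and since each segment's circuits operate independently (as already noted, the turning points keep the circuits of adjacent segments separate), the whole procedure will run in $O(1)$ rounds for all segments in parallel.

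First I would verify that the partition pattern described --- $P_1(u)$ joining the first pin of one side to the second pin of the opposite side, and $P_2(u)$ joining the second pin of one side to the first pin of the opposite side --- yields exactly two connected circuits spanning the segment. Under the assumed common chirality, the pin with label $i$ on one endpoint of an edge is connected to the pin with label $c - i + 1 = 3 - i$ on the other (recall $c = 2$). Tracing the connections across an edge $uv$, the first-pin/second-pin crossing inside each node exactly compensates for this label reversal, so that $P_1(u)$ links to $P_1(v)$ and $P_2(u)$ links to $P_2(v)$. Hence each partition set chains into a single circuit running the length of the segment, giving the two parallel circuits claimed.

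Next I would trace the beep. Once an endpoint becomes a leaf (degree one), it beeps on its second pin, which lies in exactly one of the two circuits; this selects which circuit carries the signal. Every node receives the beep on the corresponding partition set in the next round. The crucial observation is that, by the crossing structure together with common chirality, the beeping partition set has its ``first pin'' on the side facing the leaf at every node simultaneously. Each node then applies the stated rule --- first pin receiving the beep points toward the leaf --- and thereby orients itself consistently with all others in the segment. If neither endpoint is yet a leaf, no beep occurs and the segment remains unoriented, matching the inactive case; if both endpoints are leaves, beeps arrive on both circuits, which is exactly the trigger for the final segment shrinking subroutine.

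The main obstacle I expect is the chirality and pin-labeling bookkeeping: nodes need not agree on pin labels a priori, so the argument must show that the internal crossing precisely cancels the chirality-induced label reversal across every edge, so that the beep is received on a globally consistent ``first pin toward the leaf'' side. Once this invariant is established, correctness of the orientation follows, and the $O(1)$ round bound is immediate, since a single instantaneous beep suffices and the segments are independent.
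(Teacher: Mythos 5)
Your proposal is correct and follows essentially the same route as the paper's proof: a single beep from the leaf travels along one of the two parallel crossed circuits, each node reads off the leaf direction from the partition set (and pin) on which it receives the beep, and the independence of the segments' circuits at turning points gives the $O(1)$ bound for all segments in parallel. If anything, your explicit tracing of how the internal pin crossing cancels the chirality-induced label reversal across each edge is more careful on the key invariant than the paper's argument, which asserts it and instead devotes its effort to the case analysis at shared turning points.
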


\begin{proof}
In the proof we consider two cases. In the first case, a segment $s_i$ has distinct endpoints: one endpoint is a turning point and the other is a leaf. In the second case, two segments share a common turning point while each has its own leaf at the opposite end.
Consider the first case where a segment $s_i$ terminates at a turning point. Let $s_i$ be a segment consists of a sequence of nodes $\{u_0, u_2, \ldots, u_{l_{i-1}}\}$, where $u_0$ and $u_{l_{i-1}}$ are the endpoints of $s_i$, specifically, $u_0$ represents the turning point ($tp$), and $u_{l_{i-1}}$ represents the leaf.
Each node in $s_i$ has two parallel circuits defined by the partition sets $P_1$ and $P_2$. If a signal propagates along $s_i$, any intermediate node $u_j\in s_i$ where $1 \leq j \leq \ell_i - 2$ receives the signal on one of its partition sets, if it is on the first $P_1(u_j)$, the corresponding pin points toward the leaf $u_{l_{i-1}}$ of $s_i$. If it is on the other partition set $P_2(u_j)$, the corresponding pin points toward the other endpoint $u_0$ ($tp$) of $s_i$. 

Now consider the second case where a turning point $tp$ is shared by two horizontal segments $s_1 = \{u_0, u_1, \ldots, u_{l_{i-1}}\}$ and $s_2 = \{v_0, v_1, \ldots, v_{l_{i-1}}\}$. Here, $tp$ corresponds to the endpoint $u_0$ in $s_1$ and $v_0$ in $s_2$. Both leaves $u_{l_{i-1}}$ and $v_{l_{i-1}}$ initiate signals that propagate unidirectionally toward $tp$ ($u_0$) in $s_1$, and ($v_0$) in $s_2$. For any intermediate node $u_j, \in s_1,$ and $v_j \in s_2$, where $1 \leq j \leq \ell_i - 2$, let us assume that $u_j$ receives the signal on the first $P_1(u_j)$, that pin points towards the leaf $u_{l_{i-1}}$ of $s_1$, and $v_j$ receives the signal on the second $P_2(v_j)$, that pin points towards the end $tp$ ($v_0$) of the segment $s_2$. In other words, one segment will connect to $P_1$ at $tp$, while the other connects to $P_2$ at $tp$ on the opposite side, as shown in Figure~\ref{fig:segment_detection}. Regardless of the case (horizontal segments, vertical segments, or mixed horizontal and vertical), each signal propagates independently along its segment and terminates at $tp$. The partition sets $P_1$ and $P_2$ ensure that each signal connects to a distinct partition set at $tp$, which defines the orientation of the signal in the tree. 
As each signal---initiated by a leaf node---propagates independently and is processed locally at each node along the segment, and since this processing occurs only once for each signal, the entire procedure completes in $O(1)$ rounds for all segments.
\end{proof}

\begin{figure}[tbp]
    \centering
    \includegraphics[scale=0.7]{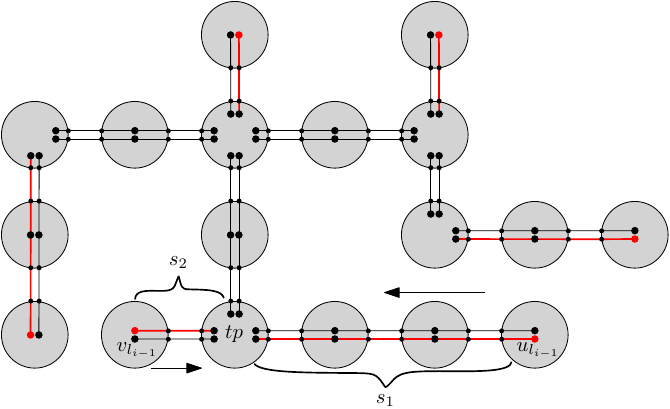}
     \caption{An illustration of the segment detection subroutine which shows the parallel circuits configuration. Each leaf (e.g.,$u_{l_{i-1}},v_{l_{i-1}}$) initiates a signal on its second pin, which propagates along the red highlighted circuit of segments $s_1$ and $s_2$. At the turning point $tp$ both $s_1$ and $s_2$ segments converge and by the partition sets each signal is processed independently.}
    \label{fig:segment_detection}
\end{figure}

After applying the \emph{segment detection} subroutine, we proceed with the following process.
\noindent\emph{Segment coloring and shrinking}. In order to compute a coloring we will make use of the PASC algorithm by Padalkin \emph{et al.}~\cite{DBLP:journals/nc/PadalkinSW24}. In the first round, the algorithm computes the parity of the distance of each node from a reference node. Let $\mathcal{S}_{tp}$ denote the set of segments converging at the turning point $tp$. For each segment $s_i \in \mathcal{S}_{tp}$, where $s_i=\{u_0,u_1,\ldots, u_{\ell_{i-1}}\}$ is a segment with $u_0=tp$ and $u_{\ell_{i-1}}$ a leaf, we apply the first round of the PASC algorithm. Each node $u_j\in s_i$  has two partition sets: primary and secondary. We use the same configurations defined in~\cite{DBLP:journals/nc/PadalkinSW24} to connect these partition sets, specifically, the primary partition set of $u_j$ is connected to the secondary partition set of its predecessor $u_{j-1}$. The secondary partition set of $u_j$ is connected to the primary partition set of its predecessor $u_{j-1}$. 
This configuration forms two disjoint circuits for each segment $s_i$, in these circuits, the partition sets alternate between primary and secondary. We then use $tp$ as a reference node for $s_i$ to compute the parity of the distance of each node $u_j \in s_i$ from $tp$.  Once the parity is computed, we color all even-parity nodes blue and all odd-parity nodes green (see Figure~\ref{fig:segment_coloring}). This assignment, using the first round of the PASC algorithm, is completed in a single round. 

\begin{figure}[tbp]
    \centering
    \includegraphics[scale=0.8]{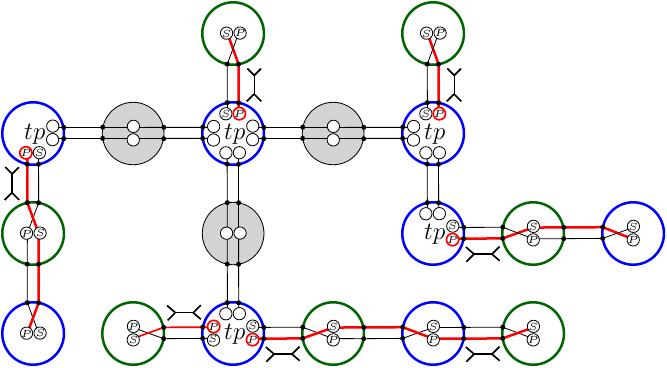}
    \caption{An illustration of PASC applied to each segment to compute parity with respect to the turning points.}
    \label{fig:segment_coloring}
\end{figure}

After coloring segments using PASC, we do the shrinking process. For every node $u_j$ with even-parity (blue) in a segment $s_i$, let $u_{j+1}$ be its successor with odd-parity (green). The even-parity node $u_j$ absorbs and pulls its edge $(u_j,u_{j+1})$ towards $u_{j}$, such that the primary circuit of $u_{j+1}$'s successor (i.e., $u_{j+2}$) connects to the secondary circuit of $u_{j}$, and the secondary circuit of $u_{j+1}$'s successor (i.e., $u_{j+2}$) connects to the primary circuit of $u_{j}$. After shrinking all odd-parity nodes, the PASC algorithm is recomputed on $s_i$, updating the parity and repeating the shrinking process. 

Let $tp_i$ be a turning point in $T$ connected to segment set $\mathcal{S}_{tp_i} = \{s_1, s_2, \ldots, s_{m_i}\}$, where $m_i$ is the number of segments connected to $tp_i$. Although the shrinking operations on the segments can be performed in parallel, the nodes immediately adjacent to $tp_i$ synchronize so that $tp_i$ does not update its state until all connected segments have finished their shrinking process. Specifically, if multiple signals are received at $tp_i$, it prioritizes the first segment from which it receives the signal and processes it accordingly. The shrinking process then continues for each segment, where each node in $s_j\in \mathcal{S}_{tp_i}$ absorbs its neighbor according to the computed PASC. The turning point $tp_i$ becomes a leaf after all segments in $\mathcal{S}_{tp_i}$ have shrunk completely. Each segment $s_j$ has length $\ell_j$, and the shrinking process of $s_j$ halves its length in each round, completing in $O(\log \ell_j)$ rounds. Once all segments in $\mathcal{S}_{tp_i}$ have been shrunk completely, the turning point $tp_i$ becomes a leaf, and the process then proceeds to the next iteration.

\begin{lemma}\label{lem:pasc-on-segment}
    For any segment $s_i$, each iteration of the \emph{segment coloring} subroutine completes in $O(1)$ rounds.  
\end{lemma}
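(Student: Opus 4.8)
The plan is to observe that a single iteration of the \emph{segment coloring} subroutine is nothing more than one iteration of the PASC algorithm of~\cite{DBLP:journals/nc/PadalkinSW24}, followed by a purely local color assignment, and then to argue that a single PASC iteration costs only $O(1)$ rounds thanks to the instantaneous signalling of reconfigurable circuits.

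First I would recall from the description of PASC that in its $i$-th iteration every node $u_j$ of the segment computes the $i$-th bit (least significant first) of $j - r$, where $u_r$ is the reference node. Here the reference node is the turning point $tp = u_0$, so $r = 0$ and each $u_j$ computes a bit of $j$ itself. The coloring we want assigns blue to even-parity nodes and green to odd-parity nodes, i.e., it depends only on the parity of $j$, which is precisely the least significant bit of $j$. By the semantics of PASC this bit is exactly the one produced in the very first iteration; hence we never need to run PASC to completion (which would cost $O(\log \ell_i)$ rounds by \Cref{lem:pasc:segment}), but only its initial iteration.

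Next I would unpack what a single PASC iteration entails and bound each part by $O(1)$ rounds. Each node sets up its primary and secondary partition sets according to its active/passive status; since every node starts active in the first iteration, this is a fixed, purely local configuration requiring no communication, so the two circuits along the segment are established in $O(1)$ rounds. The reference node $tp$ then beeps on its primary partition set, and by the defining property of reconfigurable circuits this beep is received instantaneously---at the beginning of the next round---by every node connected to the circuit, irrespective of the segment's length $\ell_i$. Each node then reads whether the beep arrived on its primary or secondary partition set, thereby learning the least significant bit of its distance to $tp$, i.e., its parity. This length-independence, supplied entirely by the circuit model's instant signal propagation, is the crucial point being exploited.

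Finally, once each node knows its parity bit, the blue/green assignment is an immediate local state update taking $O(1)$ rounds, and the same argument applies verbatim in parallel to every segment $s_i$ converging at $tp$, since each segment operates its own independent pair of circuits (as established in the \emph{segment detection} subroutine). Summing the constant contributions of circuit setup, one beep round, and the local color update yields $O(1)$ rounds per iteration, as claimed. I expect no real obstacle beyond being explicit that parity coincides with the first PASC bit and that the independence from $\ell_i$ is exactly what the instantaneous communication of the reconfigurable circuit model provides.
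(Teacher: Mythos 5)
Your proposal is correct and follows essentially the same route as the paper's own proof: one iteration of PASC (local circuit setup, a single beep from the turning point that propagates instantaneously, and reception on the primary versus secondary partition set determining parity), followed by a local color assignment, all in $O(1)$ rounds. You spell out more explicitly than the paper that parity is exactly the least significant bit produced by the first PASC iteration, but the underlying argument and its reliance on the instantaneous signalling of reconfigurable circuits are the same.
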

\begin{proof}
For any segment $s_i = \{u_0, u_2, \ldots, u_{\ell_{i-1}}\}$ in a tree $T$, where $u_0 = tp$ is a turning point and $u_{\ell_{i-1}}$ is a leaf.
The turning point $tp$ initiates a signal that propagates through the segment $s_i$. Each node $u_j \in s_i$ receives the signal through either its primary or secondary circuit, which computes its parity. Nodes receiving the signal on their primary circuit are assigned even parity, while those receiving it on their secondary circuit are assigned odd parity. The circuit configuration enforces this alternating assignment, as proved in~\cite{DBLP:journals/nc/PadalkinSW24}. Since the process involves only a single signal initiation from $tp$, thus, the coloring process completes in $O(1)$ rounds. 
\end{proof}

\begin{lemma}\label{lemma:parallel-shrinking}
    The \emph{segment coloring and shrinking} subroutine completes in $O(\log \ell)$ rounds, where $\ell$ is the length of the segment. 
\end{lemma}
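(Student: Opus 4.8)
The plan is to view the \emph{segment coloring and shrinking} subroutine as an iterated halving process and to bound the number of iterations together with the per-iteration cost. Fix a segment $s_i = \{u_0, u_1, \ldots, u_{\ell-1}\}$ with $u_0 = tp$ the turning point and $u_{\ell-1}$ the leaf. First I would establish that a single iteration costs $O(1)$ rounds: by \Cref{lem:pasc-on-segment} the parity coloring (one round of PASC) completes in $O(1)$ rounds, the parallel absorption of every green successor by its blue predecessor is a single operation and hence one round, and re-establishing the two circuits along the surviving nodes---reconnecting the primary and secondary partition sets of each surviving node to its new predecessor---is again $O(1)$. Summing these constant contributions gives a constant per-iteration cost.

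Next I would prove that each iteration reduces the number of nodes in $s_i$ by a constant factor. Since parity is measured from $u_0$, the even-parity (blue) nodes are exactly $u_0, u_2, u_4, \ldots$ and the odd-parity (green) nodes are $u_1, u_3, \ldots$; each blue $u_{2t}$ absorbs the single green $u_{2t+1}$, so precisely the odd-indexed nodes are removed and $\lceil \ell/2 \rceil$ nodes survive. Thus after $\lceil \log_2 \ell \rceil$ iterations the segment has shrunk to its single remaining endpoint $u_0 = tp$. Combined with the $O(1)$ cost per iteration, the total running time is $O(\log \ell)$.

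The part that needs the most care is arguing that the simultaneous absorptions are collision-free and that the surviving nodes again form a valid, contiguously ordered segment, so that PASC can be recomputed. Here I would use that $s_i$ is a straight segment and that every absorption acts along the segment axis, pulling the portion of $s_i$ farther from $tp$ one unit toward $tp$. Consequently each node's motion vector---determined, as in the model, by the operations along the $u_0$-to-node path---points in the single cardinal direction toward $u_0$, so the displacements are monotone along the segment and no two trajectories cross, ruling out node collisions; since a segment contains no cycle, cycle collisions cannot occur either. After the translations the surviving nodes occupy consecutive grid cells in the same order, and rewiring each surviving node to its new predecessor restores the ordered segment, so the invariant needed to reapply \Cref{lem:pasc-on-segment} in the next iteration is preserved. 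Finally I would observe that the synchronization at $tp$ (it does not change state until all segments of $\mathcal{S}_{tp_i}$ have finished) does not affect this bound, as each segment independently completes in $O(\log \ell)$ rounds.
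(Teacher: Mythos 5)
Your proposal is correct and follows essentially the same argument as the paper's proof: each iteration of parity coloring and parallel absorption costs $O(1)$ rounds and halves the segment length, giving $O(\log \ell)$ rounds overall. Your additional verification that the simultaneous absorptions are collision-free and that the survivors form a valid contiguous segment for the next PASC invocation is a welcome level of care that the paper's own proof leaves implicit.
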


\begin{proof}
    Assume a segment $s$ with a length $\ell$. After coloring the segment nodes using PASC into blue (even-parity) and green (odd-parity) sets, the odd-parity nodes are shrunk. This operation reduces the length of the segment by approximately half in each round, though the shrinking is not always exact due to specific segments length. Specifically, if $\ell$ is even or odd the segment length is reduced by $\lfloor\ell/2\rfloor$.
    Let the segment length after $r$ rounds be $\ell_r = \lfloor{\ell}/{2^r}\rfloor$. The shrinking process continues until $\ell_r = 1$, which implies ${\ell}/{2^r} = 1$, thus, $r = \lceil\log \ell\rceil$. Since each round involves constant-time operations for coloring and the segment's length is halved in each round, the time complexity of the shrinking process of a segment $s$ with length $\ell$ is $O(\log \ell)$.
\end{proof}

\noindent\emph{Final segment shrinking.} When $T$ is shrunk to a single segment,  it is not possible to determine an orientation during the segment detection subroutine, which implies that we cannot decide which endpoint of the segment will perform the PASC for the coloring and shrinking process. 
We perform a leader election between the leaves, we reuse the established circuits from the \emph{segment detection} subroutine (i.e., two parallel circuits along the segment).
Both leaves toss a coin and beep on their circuit if they toss heads. If there is a beep on only one circuit, i.e., only one leaf has tossed heads, that leaf is elected as the leader, while the other leaf acts as a turning point. All segment nodes determine the leader's direction as in the segment detection subroutine, after which we proceed to the segment coloring and shrinking subroutine. However, if both leaves beep (i.e., indicating a tie), the procedure repeats.

\begin{lemma}\label{lem:final-leader}
    The final segment agrees on a common orientation through a leader election process, which requires $O(\log n)$ rounds w.h.p.
\end{lemma}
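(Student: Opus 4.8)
The plan is to treat the leader election as a sequence of independent symmetric coin-tossing trials, bound the number of trials needed to break the symmetry between the two leaves, and then convert this into the claimed high-probability bound. Because both leaves are anonymous and occupy symmetric ends of the final segment, no deterministic round can distinguish them; the randomized tosses are what eventually separate them, so the only quantity to control is the per-trial failure probability.

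First I would make the communication setup precise. Using the common chirality assumed for the BFS shrinking algorithm together with the directional structure of the two parallel circuits inherited from the \emph{segment detection} subroutine, I would argue that the two leaves can consistently beep on \emph{distinct} circuits: leaf $A$ on one of $P_1, P_2$ and leaf $B$ on the other. This is the crucial point, since the beep model reveals only the presence and not the multiplicity of beeps on a circuit; were both leaves to use the same circuit, a two-heads tie would be indistinguishable from a single win. With distinct circuits, every node---in particular each leaf---observes at the next round exactly which of the two circuits carry a beep, so all four outcomes become fully distinguishable.

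Next I would compute the per-trial success probability. In one trial each leaf tosses a fair coin and beeps if and only if it gets heads. Exactly one leaf beeps---a decisive round electing the unique heads-leaf as leader and demoting the other to a turning point---with probability $2 \cdot \tfrac12 \cdot \tfrac12 = \tfrac12$; the remaining outcomes (both heads, a detectable tie, or both tails, a null round) are observed by both leaves and trigger a repeat. Since each trial consists of one coin toss, one beep, and one read of the circuits at the start of the next round, a trial costs $O(1)$ rounds, and successive trials use independent coins. Hence the number of rounds is, up to a constant factor, a geometric random variable with parameter $\tfrac12$: the probability that no leader is elected within $t$ trials is $2^{-t}$. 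Choosing $t = c' \log n$ makes this at most $n^{-c'}$, which is at most $n^{-c}$ for any desired constant $c$ by taking $c'$ large enough; therefore the election terminates in $O(\log n)$ rounds w.h.p. Once a leader is fixed, all intermediate nodes recover the common orientation exactly as in \Cref{lem:tree-orientation}, costing an additional $O(1)$ rounds and not affecting the bound.

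The main obstacle is not the probabilistic tail argument, which is standard, but justifying the distinguishability of the outcomes ``exactly one leaf beeped'' and ``both leaves beeped'' on a beep channel that cannot count beeps. I would therefore concentrate the rigor on the first step---showing that chirality agreement plus the crossing partition-set configuration of segment detection forces the two leaves onto different circuits---after which the geometric bound follows immediately.
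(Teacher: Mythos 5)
Your proposal is correct and follows essentially the same route as the paper: both leaves toss fair coins each round and beep on their respective circuits, a round succeeds with probability $\tfrac12$ when exactly one tosses heads, and the geometric tail bound $2^{-t} \leq n^{-c}$ for $t = c\log n$ gives the $O(\log n)$ w.h.p.\ bound, with orientation then recovered as in the segment detection subroutine. Your added emphasis on why the two leaves necessarily beep on \emph{distinct} circuits (so that a two-heads tie is distinguishable from a single win under a beep model that cannot count) is a point the paper's proof leaves implicit, but it does not change the argument.
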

\begin{proof}
    To elect a leader for the final segment of length $\ell$, we perform a leader election process using the established circuits as described above. Both leaves of the segment toss a coin independently in each iteration. A leader is elected if exactly one of the leaves tosses \emph{heads} and beeps on its circuit, while the other tosses \emph{tails}.  The process fails in an iteration if both leaves toss the same result (either both heads or both tails).
    The probability of failure in a single iteration is $\Pr[\text{no leader}]=\frac{1}{2}\cdot\frac{1}{2}+\frac{1}{2}\cdot\frac{1}{2}=\frac{1}{2}$, and the success probability in one iteration is $\Pr[\text{leader}]=1-\Pr[\text{no leader}]=\frac{1}{2}$. After $k$ iterations, the probability that no leader has been elected is ${(\frac{1}{2})}^k$. To ensure the process succeeds w.h.p.\ we bounded the failure probability with $\frac{1}{n^c}$. So, ${(\frac{1}{2})}^k \leq \frac{1}{n^c}$ and therefore $k \geq c \log n$. Thus, to ensure success w.h.p., the process requires at least $c \log n$. Each iteration requires a single round, so the total worst-case runtime for the process to succeed w.h.p.\ is $O(\log n)$ rounds. 
\end{proof}

By Lemmas~\ref{lem:tree-orientation}, \ref{lem:pasc-on-segment}, \ref{lemma:parallel-shrinking}, \ref{lem:final-leader}, we conclude the following:
\begin{theorem}\label{theo:shrinking-tree}
   After an $O(\log n)$-round preprocessing (w.h.p.), the \emph{BFS shrinking} shrinks any tree $T$ to a single node in $O(k \log n)$ rounds w.h.p.
\end{theorem}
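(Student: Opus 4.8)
The plan is to bound the total running time as a sum of three contributions and to verify that each is controlled by the lemmas already established: the $O(\log n)$-round preprocessing (chirality and orientation agreement, w.h.p.), the cost of the iterated \emph{segment coloring and shrinking} phase over the body of the tree, and the $O(\log n)$-round \emph{final segment shrinking} (w.h.p.). The per-segment costs are already supplied: \Cref{lem:tree-orientation} gives $O(1)$-round orientation per segment once an endpoint is a leaf, \Cref{lem:pasc-on-segment} gives the $O(1)$-round recoloring per iteration, and \Cref{lemma:parallel-shrinking} gives an $O(\log \ell)$ total shrinking cost for a segment of length $\ell$. Since every segment has $\ell \le n$, each one finishes in $O(\log n)$ rounds. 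Hence the entire argument reduces to counting how many times segments must be processed \emph{sequentially} (i.e.\ one only after another has exposed a new leaf) before $T$ collapses, and showing this count is $O(k)$.

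To make the counting precise, I would introduce the \emph{turning-point tree} $T_{tp}$ obtained from $T$ by contracting each maximal segment to a single edge; its vertex set is exactly $TP$, so $T_{tp}$ has at most $k$ vertices and every root-to-leaf path contains at most $k$ turning points and hence at most $k-1$ segments. The structural observation is that BFS shrinking peels $T_{tp}$ inward from its leaves: a segment $s_i$ can begin its coloring-and-shrinking only once one endpoint has become a leaf, which (by the synchronization at turning points described before \Cref{lem:pasc-on-segment}) happens only after all the \emph{other} incident outer segments of that endpoint have been fully absorbed, dropping its degree to one. Writing $F(tp)$ for the round at which a turning point is absorbed into its parent segment, one gets a recurrence $F(tp) = \max_{c} F(c) + O(\log n)$ over the children $c$ of $tp$ in $T_{tp}$, whose unrolling equals the maximum, over all root-to-leaf chains, of the summed shrinking costs along the chain. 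Because segments sharing a leaf endpoint are shrunk \emph{in parallel}, each wave lasts $O(\log n)$ (governed by its longest active segment), and any chain contributes at most $k$ terms of size $O(\log n)$; thus $F$ of the final center is $\sum_i O(\log \ell_i) \le k \cdot O(\log n) = O(k\log n)$, dominating the process up to the point where $T$ is a single segment. At that point \Cref{lem:final-leader} elects an orientation in $O(\log n)$ rounds w.h.p., and a last application of \Cref{lemma:parallel-shrinking} finishes in $O(\log n)$ rounds. Adding the $O(\log n)$ preprocessing and taking a union bound over the two randomized sub-phases (each failing with probability $n^{-\Omega(1)}$) yields $O(k\log n)$ rounds w.h.p.

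The main obstacle I anticipate is justifying the $O(k)$ bound on the depth of sequential segment processing despite the lack of a global synchronizer: different regions of the tree progress asynchronously, so a long outer segment may still be shrinking while a short segment elsewhere has already exposed a new turning point. I would argue that such asynchrony can only help — a turning point whose outer neighbours are not yet done simply stays inactive (the ``no beep'' scenario of the segment-detection subroutine) and waits — so the global finishing time is \emph{exactly} the maximum over chains of the per-chain sequential cost, and never larger than $k\cdot O(\log n)$. A secondary subtlety to nail down is confirming that a turning point of degree $m_i>1$ in $T_{tp}$ really does become a leaf after its $m_i-1$ outer segments are absorbed, so that each chain step advances the peeling by exactly one turning point and the chain length is bounded by $|TP|=k$ rather than by any larger quantity such as total path length in $T$.
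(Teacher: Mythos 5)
Your proposal is correct and follows essentially the same route as the paper's proof: bound each segment's coloring-and-shrinking at $O(\log \ell) \le O(\log n)$ rounds via Lemmas~\ref{lem:tree-orientation}--\ref{lemma:parallel-shrinking}, observe that a turning point only becomes active once all its outer segments are fully absorbed so the leaf-inward peeling has sequential depth at most $k$, and add the $O(\log n)$ preprocessing and final-segment leader election w.h.p. Your turning-point-tree recurrence $F(tp)=\max_c F(c)+O(\log n)$ is a slightly more explicit formalization of the paper's ``$k$ recursion steps of $O(\log n)$ each'' counting, but it is the same argument and yields the same bound.
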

\begin{proof}
Let $T$ be tree with $n$ nodes and $k$ turning points. The tree $T$ consists of turning points $\{tp_1, tp_2, \ldots, tp_k\}$, where each turning point $tp_i$ is connected to a set of segments $\mathcal{S}_{tp_i} = \{s_1, s_2, \ldots, s_{m_i}\}$, with $m_i = \delta(tp_i) - 1$. Since we are on two dimensional square grid $\delta(tp_i)$ is bounded by $m_i\leq 4$. 
By using \emph{segment detection, coloring and shrinking} subroutines (Lemmas~\ref{lem:tree-orientation}, \ref{lem:pasc-on-segment}, and \ref{lemma:parallel-shrinking}), each segment $s_i$ is processed independently, shrinking its length in $O(\log \ell)$ rounds, where $\ell$ is the length of the segment.
The \emph{BFS shrinking} algorithm progresses recursively, shrinking all segments in $\mathcal{S}_{tp_i}$ connected to a turning point $tp_i$. Once all segments in $\mathcal{S}_{tp_i}$ are fully shrunk, $tp_i$ becomes a leaf of its parent turning point $tp_p$, which means $tp_i$ is part of one of the segments in $\mathcal{S}_{tp_p} = \{s_1, s_2, \ldots, s_{m_p}\}$, such that $s_j = \{tp_p=u_0, u_2, \ldots, u_{\ell_{j-1}}=tp_i\}$. This recursive process continues until $T$ is reduced to either a single node or a single segment. If $T$ is reduced to a single node, it terminates, if it is reduced to a single segment, the \emph{final segment shrinking} subroutine (Lemma~\ref{lem:final-leader}) is performed, which determines the segment orientation in $O(\log n)$ rounds w.h.p. 
Since the recursion processes $k$ turning points and each step includes shrinking segments of length of at most $n$, each step in the recursion takes $O(\log n)$ rounds. Thus, the total time complexity of this approach is $O(k\log n)$ rounds.
\end{proof}

\subsection{Geometrical Reduction}\label{subsec:geo:eqv:tree}
In this section, we consider the reduction of a tree $T_I$ into a geometrically equivalent tree $T_F$.
We will show some lower bounds for the case that we are only allowed to perform shrinking operations.
Note that in general, it is not always possible to perform such a reduction with only shrinking operations.

\begin{theorem}
\label{the:lowerBound}
    Let $\cal A$ be an algorithm that reduces a path $T_I$ to a geometrically equivalent path $T_F$ using only shrinking operations.
    Then, there exists an infinite family of pairs of paths of $k = \Theta(\sqrt[3]{n})$ turning points each, for which $\cal A$ requires $\Omega(k \log k)$ rounds.
\end{theorem}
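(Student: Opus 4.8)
My plan is to lower-bound the running time as the product of two independent effects: a \emph{rate} bound, showing that no single segment can be shrunk faster than geometrically, and a \emph{serialization} bound, showing that collision avoidance prevents the $\Theta(k)$ long segments from being shrunk simultaneously. Multiplying the two yields $\Omega(k\log k)$.

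For the rate bound, fix any segment $s_i$ of the current path and note that $\length{s_i}$ can decrease only through shrinking operations that absorb nodes lying on $s_i$ itself; operations on other segments merely translate $s_i$ rigidly. Since the model allows each node at most one operation per round and each operation removes exactly one absorbed node, the absorptions performed inside $s_i$ in a single round form a matching along the segment's node-path and therefore delete at most $\lceil\length{s_i}/2\rceil$ of its nodes. Hence reducing $s_i$ from $\length{s^I_i}$ to $\length{s^F_i}$ requires at least $\lceil\log_2(\length{s^I_i}/\length{s^F_i})\rceil$ rounds in which some operation acts on $s_i$. I would then build, for infinitely many $n$, a self-avoiding rectilinear path $T_I$ with $k=\Theta(\sqrt[3]{n})$ turning points made of $\Theta(k)$ segments, each of length $\Theta(n/k)=\Theta(k^2)$, whose geometry nests or interlocks the segments so that the block of the path displaced by shrinking one segment sweeps over the footprint of a still-unshrunk segment; the target $T_F$ collapses every such segment to length $O(1)$. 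By the rate bound each segment then demands $\Omega(\log(k^2))=\Omega(\log k)$ rounds of attention, and since $\log n=\Theta(\log k)$ in this regime the total attention demanded is $\sum_i\Omega(\log\length{s^I_i})=\Omega(k\log k)$.

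Serialization is the heart of the proof and the step I expect to be hardest. The rate bound alone yields $\Omega(k\log k)$ only if this per-segment attention cannot be overlapped in time: if it could, a global counting argument (removing a constant fraction of the remaining nodes each round) would finish in $O(\log n)$ rounds. I therefore need an adversarial invariant certifying that, in \emph{every} collision-free schedule, only $O(1)$ of the $k$ long segments can make progress in any single round. Concretely, I would introduce a potential---e.g.\ a linear order on the segments induced by the nesting---and argue that it can advance by only a bounded amount per round, because shrinking a segment $s_j$ while an earlier segment $s_i$ is still long would translate part of the path onto an occupied cell and thus create a node collision. Charging one unit of progress per round to the unique active segment then gives $\Omega(\sum_i\log\length{s^I_i})=\Omega(k\log k)$ rounds.

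The main obstacle is precisely ruling out clever ``lockstep'' schedules, in which many nested parts are compressed proportionally at the same time and stay collision-free throughout; such schedules would collapse the sum of per-segment costs back to a single $O(\log n)$ term. The construction must therefore be engineered so that simultaneous progress on two segments is provably impossible, exploiting that $T_F$ fixes a \emph{specific} target geometry (specific relative positions of the turning points) rather than merely a length profile, so that the displacements required to reach $T_F$ are incompatible across segments unless performed in a fixed order. Establishing this incompatibility as an invariant that holds against an arbitrary algorithm---not just a natural greedy one---is where the real work lies.
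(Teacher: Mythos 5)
Your two-part architecture---a per-segment rate bound of $\Omega(\log(\length{s^I_i}/\length{s^F_i}))$ rounds, multiplied by a serialization argument forcing the segments to be handled essentially one at a time---is exactly the architecture of the paper's proof, and your rate bound (absorptions within a segment form a matching, so at most half its nodes disappear per round) is sound. However, the serialization step, which you yourself flag as ``where the real work lies'' and leave as an acknowledged obstacle, is precisely the content of the paper's argument (Claim~\ref{cl:lowerBound}); without it your proposal establishes only the trivial $\Omega(\log n)$ bound, since a lockstep schedule compressing all segments proportionally is not excluded. So there is a genuine gap: you have the right plan but are missing the one idea that makes it work.

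The paper closes this gap with a concrete double-spiral construction rather than a generic ``nested'' family. The initial path is an outer spiral $P_b$ interleaved with an inner spiral $P_r$; the target keeps $P_b$ \emph{unchanged} and replaces $P_r$ by a much tighter spiral $P_g$. Because only shrinking is available, $P_b$ can never be touched (any operation on it would undershoot its target irrecoverably), so it acts as a rigid wall threaded between the arms of $P_r$. The segment lengths are chosen by a recursion (not uniformly $\Theta(k^2)$ as you propose---a self-avoiding spiral forces the arm lengths to grow, here from $\Theta(k)$ up to $\Theta(k^2)$ with $n=\Theta(k^3)$) so that the clearance between arm $P_r[r_{i-1},r_i]$ and the enclosing wall $P_b[b_{i+3},b_{i+4}]$ equals exactly the \emph{target} length $|P_g[g_{i-2},g_{i-1}]|$. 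Consequently, any shrinking operation on segment $i$ performed before the prefix $P_r[r_1,r_i]$ has reached its final form translates arm $i-1$ toward the wall and consumes clearance that is later needed verbatim, making the required reduction of arm $i-1$ impossible without a collision; a short induction over later segments handles the case where subsequent operations temporarily restore the clearance. This yields a strict precedence order on the $\Theta(k)$ segments, each of which needs $\Omega(\log k)$ rounds by your rate bound, giving $\Omega(k\log k)$. In short: your outline matches the paper, but the quantitative ``clearance equals target length'' engineering of the construction---the part that defeats arbitrary (not just greedy) schedules---is missing from your proposal and cannot be supplied by a generic nesting argument.
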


\begin{proof}
    We prove the theorem by giving a pair of paths for which any shrinking algorithm will take $\Omega(k\log n)$ rounds.
    Let $k \geq 4$.
    Our initial path $T_I$ consists of two spirals $P_b$ (colored black) and $P_r$ (colored red) as shown in Figure~\ref{fig:lowerBound} (a).
    Our final path $T_F$ consists of a subpath $P_b$ and a subpath $P_g$ (colored green) that is geometrically equivalent to $P_r$, as shown in Figure~\ref{fig:lowerBound} (b).
    We now describe the construction of $P_b$, $P_r$ and $P_g$.

\begin{figure}[htbp]
    \centering
    \includegraphics[scale=0.9]{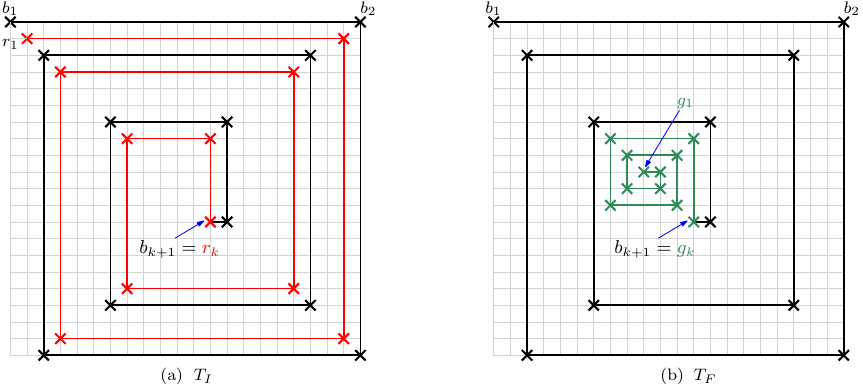}
    \caption{
        (a) An initial path $T_I$. (b) A final path $T_F$.
        Each algorithm $\cal A$ has to reduce the subpath between $r_1$ and $r_i$ to the subpath between $g_1$ and $g_i$ before it can start to reduce $r_i r_{i+1}$ to $g_i g_{i+1}$.
    }
    \label{fig:lowerBound}
\end{figure}

    Let the turning points of $P_b$, $P_r$ and $P_g$ be labeled as $(b_1, b_2, \ldots, b_k, b_{k+1})$, $(r_1, r_2, \ldots, r_k)$ and $(g_1, g_2, \ldots, g_k)$ respectively as shown in Figure~\ref{fig:lowerBound}.
    Note that $r_i = g_i$ for all $i \in [1,k-1]$.
    Let $P$ be a path and $u, v \in V(P)$.
    Then, we denote the subpath of $P$ between $u$ and $v$ as $P[u,v]$.
    Moreover, we denote the length of $P$ by $|P|$.
    Then, we have the following.
    \begin{align*}
        |P_g[g_i,g_{i+1}]| &= \lceil i/2 \rceil + 1 \text{ for all } i \in [1,k-1] \\
        |P_b[b_k,b_{k+1}]| &= 2 \\
        |P_b[b_{k-1},b_{k}]| &= |P_g[g_{k-1},g_{k}]| + 1 \\
        |P_b[b_{k-2},b_{k-1}]| &= |P_g[g_{k-2},g_{k-1}]| + 1 \\
        |P_b[b_{i},b_{i+1}]| &= |P_b[b_{i+2},b_{i+3}]| + \lceil i/2 \rceil + 1 \text{ for all } i \in [1,k-3] \\
        |P_r[r_{k-1},r_{k}]| &= |P_b[b_{k-1},b_{k}]| - 1 \\
        |P_r[r_{i},r_{i+1}]| &= |P_b[b_{i},b_{i+1}]| - 2  \text{ for all } i \in [1,k-2]
    \end{align*}

    By solving these recursive equations, we get that $n = |P| = \Theta(k^3)$.
    Furthermore, $|P_r[r_i,r_{i+1}]| = \Omega(k)$ for each $i \in [1,k-1]$.
    Observe that, by construction, $T_I$ and $T_F$ are geometrically equivalent.
    We will prove Theorem~\ref{the:lowerBound} by proving the following claim.
    
    \begin{claim}\label{cl:lowerBound}
        Let $\cal A$ be an algorithm that reduces $T_I$ to $T_F$.
        Then, $\cal A$ will perform a shrinking operation on the segment $P_r[r_i, r_{i+1}]$ only when the subpath $P_r[r_1, r_i]$ is reduced to $P_g[g_1, g_i]$.
    \end{claim}
    
    \begin{claimproof}
    
        Assume by contradiction, $\cal A$ performs a shrinking operation on the segment $P_r[r_i,$ $r_{i+1}]$ before the subpath $P_r[r_1, r_i]$ is reduced to $P_g[g_1, g_i]$.
        Note that $1 < i < k-2$ holds since we do not perform any operations on $P_r[r_{k-2}, r_k]$ by construction.
        This operation will result in reducing the distance between the segments $P_r[r_{i-1}, r_i]$ and $P_b[b_{i+3}, b_{i+4}]$.
        If $i = 2$, $P_r[r_1,r_2]$ will collide with $P_b[b_1,b_2]$, which would result in a contradiction.
        So, $i > 2$ must hold.
        Due to the construction of $T_I$, the distance between $P_r[r_{i-1}, r_i]$ and $P_b[b_{i+3}, b_{i+4}]$ is equal to $|P_g[g_{i-2}, g_{i-1}]|$.
        Hence, reducing that distance prevents us from reducing $P_r[r_{i-1}, r_i]$ without causing a collision, which would result in a contradiction.
        However, another operation in a segment $P_r[r_j,$ $r_{j+1}]$ with $j > i$ may increase the distance between $P_r[r_{i-1}, r_i]$ and $P_b[b_{i+3}, b_{i+4}]$ again.
        In this case, we repeat the argument with $P_r[r_j,$ $r_{j+1}]$.
        Since the number of segments is bounded by $k$, we will eventually reach a segment that results in a contradiction.
    \end{claimproof}

    Due to Claim~\ref{cl:lowerBound}, we can conclude that any shrinking algorithm will shrink $P_r$ by shrinking the segments $P_r[r_1, r_2], P_r[r_2, r_3], \ldots, P_r[r_{k-1}, r_k]$ in this order.
    The length of each segment is $\Omega(k)$, thus each segment can be reduced to its final length in $\Omega(\log k)$ rounds.
    As there are $k$ turning points, we achieve our bound as stated in Theorem~\ref{the:lowerBound}.
\end{proof}

The following corollary shows that we cannot hope for a $o(\log^2 n)$-rounds algorithm even for small $k = \Theta(\log n)$.

\begin{corollary}\label{cor:geom-log2-lower-bound}
    Let $\cal A$ be an algorithm that reduces a path $T_I$ to a geometrically equivalent path $T_F$ using only shrinking operations.
    Then, there exists an infinite family of pairs of paths of $k = \Theta(\log n)$ turning points each, for which $\cal A$ requires $\Omega(\log^2 n)$ rounds.
\end{corollary}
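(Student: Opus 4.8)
The plan is to obtain the corollary by \emph{rescaling} the construction of Theorem~\ref{the:lowerBound}, trading turning points for segment length. The sequential-processing obstruction established in Claim~\ref{cl:lowerBound}---that any algorithm must reduce $P_r[r_1,r_i]$ to $P_g[g_1,g_i]$ before it may touch $P_r[r_i,r_{i+1}]$---is a purely geometric statement about the two nested spirals and does not depend on the absolute scale of the construction. Hence I would keep the same spiral topology with $k$ turning points but blow up all segment lengths by a common factor $\lambda$, so that each red segment becomes long enough to cost $\Omega(\log n)$ rounds rather than merely $\Omega(\log k)$. Note that plugging $k=\Theta(\log n)$ directly into the $\Omega(k\log k)$ bound only yields $\Omega(\log n\log\log n)$, so a genuinely longer per-segment cost is needed, and lengthening the segments is exactly what rescaling buys.

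Concretely, first I would fix the integer $k$, solve the recursive length system of Theorem~\ref{the:lowerBound} to obtain its (integer) segment lengths, and then multiply every such length by $\lambda = 2^{\Theta(k)}$ before laying out the spiral. The key quantity in Claim~\ref{cl:lowerBound} is that the clearance between $P_r[r_{i-1},r_i]$ and $P_b[b_{i+3},b_{i+4}]$ equals $|P_g[g_{i-2},g_{i-1}]|$; since this is an equality between the (now $\lambda$-scaled) lengths, and a premature shrink would reduce that clearance while the amount later required to reduce $P_r[r_{i-1},r_i]$ is scaled by the same $\lambda$, the impossibility of a premature shrink without a collision carries over verbatim. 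This yields a new pair $(T_I,T_F)$ that is still geometrically equivalent and still forces the segments $P_r[r_1,r_2],\dots,P_r[r_{k-1},r_k]$ to be reduced strictly in order.

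Next I would verify the parameters. The total size becomes $n = \Theta(\lambda k^3) = \Theta(2^{\Theta(k)}k^3)$, so $\log n = \Theta(k)$ and hence the construction has $k = \Theta(\log n)$ turning points, as required. Each red segment now has length $\Omega(\lambda k) = 2^{\Theta(k)} = n^{\Omega(1)}$; since a shrinking process can at best halve a segment's length per round, reducing such a segment to its target length costs $\Omega\bigl(\log(n^{\Omega(1)})\bigr) = \Omega(\log n)$ rounds. Combining the sequential constraint with this per-segment cost gives a total of $\Omega(k\cdot\log n) = \Omega(\log n\cdot\log n) = \Omega(\log^2 n)$ rounds, which is the claimed bound.

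The main obstacle I anticipate is ensuring the sequential-processing claim survives rescaling rather than holding only up to lower-order error. The clean way around this is the one above: do not rescale the recurrence itself, but first solve it to integer lengths and only then multiply each length by $\lambda$. Because the clearance identity underlying Claim~\ref{cl:lowerBound} is then an exact equality between scaled integers, the geometric collision argument is scale-invariant and transfers with no re-proof; the additive $+1$ and $-2$ offsets of the original system never need separate treatment since they are baked into the solved lengths before scaling. What remains is purely arithmetic bookkeeping---confirming $k=\Theta(\log n)$ from $n=\Theta(\lambda k^3)$ with $\lambda=2^{\Theta(k)}$, and that the shortest red segment still has length $n^{\Omega(1)}$ so that each of the $k$ segments indeed costs $\Omega(\log n)$ rounds.
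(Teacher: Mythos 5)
There is a genuine gap in the step where you claim each red segment now costs $\Omega(\log n)$ rounds. The number of rounds needed to shrink a segment is governed by the \emph{ratio} of its initial length to its target length (each round can at most halve a segment), not by its absolute length. Because you multiply \emph{every} length in the recursive system by $\lambda$ --- including the target lengths $|P_g[g_i,g_{i+1}]|$, which you must do for your scale-invariance argument for Claim~\ref{cl:lowerBound} to go through --- each segment is reduced from $\lambda\cdot a_i$ to $\lambda\cdot b_i$, and this still takes only $\log(a_i/b_i) = O(\log k) = O(\log\log n)$ halving rounds, exactly as in the unscaled construction. Summing over the $k=\Theta(\log n)$ sequentially-forced segments, your construction yields only $\Omega(\log n\cdot\log\log n)$ --- precisely the bound you correctly identified at the outset as insufficient. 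In short, uniform rescaling preserves the collision structure \emph{because} it preserves all ratios, and for the same reason it cannot increase the per-segment round cost; the two goals are in tension and your argument conflates ``segment of length $n^{\Omega(1)}$'' with ``segment requiring $\Omega(\log n)$ rounds.''

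The paper's proof avoids this by lengthening the initial configuration \emph{without} lengthening the targets: it adds $\exp(k)/k$ nodes to each of the alternating segments $P_b[b_i,b_{i+1}]$ and $P_r[r_i,r_{i+1}]$ (for odd or even $i$, depending on the parity of $k$), while the green target lengths stay at $\lceil i/2\rceil+1 = O(k)$. This makes the reduction ratio $\exp(\Omega(k))$ for half of the segments, so each genuinely costs $\Omega(k)=\Omega(\log n)$ rounds, and $n=\Theta(\exp k)$ gives $k=\Theta(\log n)$; Claim~\ref{cl:lowerBound} is then re-verified for the modified recurrences rather than inherited by a scaling symmetry. To repair your argument you would need an asymmetric modification of this kind (blow up initial lengths relative to target lengths) and then re-check the clearance identities directly, since the exact scale-invariance you rely on no longer applies.
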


\begin{proof}
\label{cor:lowerBound}
    We adapt our construction of \Cref{the:lowerBound} as follows.
    If $k$ is even (odd), we add $\exp k / k$ nodes to each $P_b[b_i,b_{i+1}]$ and $P_r[r_i,r_{i+1}]$ with an odd (even) $i$.
    Overall, we change the length of $2 \lfloor k/2 \rfloor$ many segments.
    By solving the adapted recursive equations, we get $n = |P| = \Theta(\exp k)$.
    Moreover, $|P_r[r_i,r_{i+1}]| = \Omega(\exp k / k)$ for half of the $i \in [1,k-1]$.
    Observe that, by construction, $T_I$ and $T_F$ are geometrically equivalent.
    The corollary follows from the fact that \Cref{cl:lowerBound} still holds.
\end{proof}

\subsection{Incompressible Reduction}\label{subsec:incomp:tree}

In this section, we present the \emph{incompressible tree} algorithm which reduces an initial arbitrary tree $T_I$ 
to its incompressible form $i(T_I)$.
We assume that we have a leader node given and that the nodes share a common compass orientation and chirality.
Otherwise, we establish these assumptions in a preprocessing phase (see \Cref{sec:prelim}).
This allows us to define a direction for each segment (w.l.o.g. $\rightarrow$ and $\downarrow$) and with that an order.

\noindent\textbf{Incompressible tree reduction}.
In the following, we present the \emph{incompressible tree} algorithm.
For now, we assume that each node knows whether it is an incompressible node and with that whether it belongs to a compressible column and row.
Almalki \emph{et al.} \cite{DBLP:conf/algosensors/AlmalkiGM24} have proven that in order to obtain the incompressible shape, we have to compress all compressible columns and rows.

The nodes of a maximal sequence of consecutive compressible columns (rows) form sets of parallel horizontal (vertical) subsegments of the same length \cite{DBLP:conf/algosensors/AlmalkiGM24} which we call \emph{compressible subsegments}.
Hence, in order to shrink the compressible columns (rows), we need to shrink the compressible subsegments. For that, we apply the \emph{segment coloring and shrinking} subroutine to compress all compressible columns and rows (see \Cref{lemma:parallel-shrinking}).
Since we cannot shrink a segment to length $0$, we add the preceding incompressible node to each compressible subsegment and shrink the resulting subsegments to $1$. We proceed as follows.
First, we apply the subroutine on all horizontal compressible subsegments in parallel.
Then, we repeat the procedure with all vertical compressible subsegments.

\begin{theorem}
\label{th:incompressible_tree}
    After an $O(\log n)$-round preprocessing (w.h.p.), and given the incompressible nodes, the \emph{incompressible tree} algorithm reduces the initial tree $T_I$ to its incompressible form $i(T_I)$ in $O(\log n)$ rounds.
\end{theorem}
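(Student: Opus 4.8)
The plan is to verify correctness, then collision-freeness, then the runtime; correctness is essentially inherited from prior work, so the technical heart is the collision argument for the two parallel phases.

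For correctness, I would invoke the characterization of Almalki \emph{et al.} \cite{DBLP:conf/algosensors/AlmalkiGM24}: $i(T_I)$ is obtained precisely by compressing every compressible column and row. Shrinking all horizontal compressible subsegments compresses exactly the compressible columns, and the subsequent vertical phase compresses the compressible rows. Because each compressible subsegment is first augmented with its preceding incompressible node and then shrunk to length $1$, every segment node of a compressible column (row) is absorbed while the adjacent incompressible node survives; hence the final shape is $T_I$ with exactly its compressible columns and rows deleted, i.e.\ $i(T_I)$.

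The main obstacle is showing that running the \emph{segment coloring and shrinking} subroutine on all horizontal subsegments in parallel is collision-free, and likewise for the vertical phase. I would exploit the structural fact (from \cite{DBLP:conf/algosensors/AlmalkiGM24}) that within a maximal run of consecutive compressible columns the induced subsegments are pairwise parallel and of equal length, so the synchronization mechanism of \cite{DBLP:journals/nc/PadalkinSW24} makes them halve in lockstep and each intermediate configuration is again a compression of the previous one. Since the entire horizontal phase only translates parts of the tree horizontally, I would rule out node collisions by viewing it as the position map sending each surviving node $(x,y)$ to $(x',y)$, where $x'$ subtracts the number of compressible columns strictly to the left of $x$: this map fixes the vertical coordinate and preserves the left-to-right order within every row, so it is injective on the surviving node set and no two nodes ever coincide. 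The same injective, order-preserving argument applies to each individual round of the subroutine, so the configuration stays valid throughout; as $T_I$ is a tree there are no cycles and hence no cycle collisions. Performing the horizontal phase before the vertical one keeps all displacements within a phase unidirectional, which is what makes this single-coordinate argument go through; the vertical phase is symmetric.

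For the runtime I would apply \Cref{lemma:parallel-shrinking}: a subsegment of length $\ell$ is shrunk in $O(\log \ell)$ rounds, and since every subsegment has length at most $n$ and all of them are shrunk in parallel, each of the two phases completes in $O(\log n)$ rounds, with the synchronization across subsegments of differing lengths adding only a constant factor. Together with the $O(\log n)$-round (w.h.p.) preprocessing for leader election and compass/chirality agreement, this gives the claimed $O(\log n)$ total.
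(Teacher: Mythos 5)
Your proposal is correct and follows essentially the same route as the paper: shrink all horizontal compressible subsegments in parallel, then all vertical ones, argue that equal-length subsegments within each maximal run of compressible columns (rows) shrink in lockstep so that incompressible columns (rows) are unaffected and no collisions arise, and conclude the $O(\log n)$ bound from \Cref{lemma:parallel-shrinking}. Your explicit injective, order-preserving position-map argument is just a more detailed rendering of the paper's one-line "no node can enter or leave an incompressible column (row)" claim, and the appeal to the synchronization mechanism is harmless but unnecessary since the lockstep behaviour already follows from the equal subsegment lengths.
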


\begin{proof}
    Since we apply the segment coloring and shrinking subroutine on all compressible subsegments in parallel, all compressible subsegments of the same maximal sequence of consecutive compressible columns (rows) shrink at the same speed.
    This implies that the incompressible columns (rows) do not change, i.e., no node can enter or leave an incompressible column (row).
    These two facts imply that there cannot be a collision ---neither within the compressible columns (rows) nor within the incompressible columns (rows).
    Since no collisions can occur, the runtime follows from \Cref{lemma:parallel-shrinking}.
\end{proof}

Note that an incompressible tree contains at most $O(k^2)$ nodes since each column and row must contain at least one of the $k$ turning points.
Hence, by running the incompressible tree algorithm prior to the BFS shrinking algorithm (see \Cref{theo:shrinking-tree}), we can improve the runtime.

\begin{corollary}
    Given the incompressible nodes, the \emph{incompressible tree} algorithm combined with the \emph{BFS shrinking} algorithm reduces the initial tree to a single node in $O(\log n + k \log k)$ rounds w.h.p.
\end{corollary}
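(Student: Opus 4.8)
The plan is to compose the two algorithms sequentially and exploit the fact that the intermediate shape $i(T_I)$ is far smaller than $T_I$. First I would run the preprocessing only once (leader election, compass alignment, and chirality agreement), which costs $O(\log n)$ rounds w.h.p.\ by the primitives of \Cref{sec:prelim}. This single preprocessing is shared by both phases, so neither the incompressible tree algorithm nor the BFS shrinking algorithm needs to re-establish these assumptions.

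In the first phase I would invoke the \emph{incompressible tree} algorithm (\Cref{th:incompressible_tree}) to reduce $T_I$ to its incompressible form $i(T_I)$ in $O(\log n)$ rounds, collision-free. The crucial structural observation is that $i(T_I)$ is topologically equivalent to $T_I$ and therefore still a tree with exactly $k$ turning points, while containing only $n' = O(k^2)$ nodes: every incompressible column and every incompressible row must contain at least one of the $k$ turning points, so there are at most $k$ incompressible columns and at most $k$ incompressible rows, and the nodes of $i(T_I)$ lie at their intersections.

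In the second phase I would apply the \emph{BFS shrinking} algorithm (\Cref{theo:shrinking-tree}) to $i(T_I)$, skipping its preprocessing since the leader, compass, and chirality are already in place. Since $i(T_I)$ is a tree with $k$ turning points and $n' = O(k^2)$ nodes, \Cref{theo:shrinking-tree} yields a runtime of $O(k \log n') = O(k \log k^2) = O(k \log k)$ rounds; equivalently, each of the $k$ recursive shrinking steps handles segments of length at most $n' = O(k^2)$ and thus costs $O(\log n') = O(\log k)$ rounds by \Cref{lemma:parallel-shrinking}. Both phases are collision-free by their respective theorems, and running them in sequence preserves this, so the composed algorithm reduces $T_I$ to a single node.

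Summing the shared preprocessing with the two phases gives $O(\log n) + O(\log n) + O(k \log k) = O(\log n + k \log k)$ rounds. The main subtlety I expect to address is the w.h.p.\ guarantee. The only randomized part of BFS shrinking on $i(T_I)$ is the \emph{final segment} leader election (\Cref{lem:final-leader}), which on a shape of $O(k^2)$ nodes succeeds with high probability \emph{in $k$} after only $O(\log k)$ rounds; to obtain success with high probability in the original parameter $n$, I would let this leader election run for $\Theta(\log n)$ rounds instead. This extra cost is absorbed into the additive $O(\log n)$ term and hence does not change the stated bound.
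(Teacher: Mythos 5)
Your proposal is correct and follows essentially the same route as the paper: run the incompressible tree algorithm first to shrink $T_I$ to $i(T_I)$ with $O(k^2)$ nodes (since every incompressible column/row contains a turning point), then apply BFS shrinking to the smaller instance for $O(k\log k)$ additional rounds. Your extra remark about extending the final-segment leader election to $\Theta(\log n)$ iterations so the guarantee holds w.h.p.\ in $n$ rather than in $k$ is a valid refinement that the paper leaves implicit.
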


\noindent\textbf{Incompressible node computation}.
If we do not have the incompressible nodes given, we need to compute them before applying the \emph{incompressible tree} algorithm.
First, note that given an arbitrary node $u \in S$, we can apply the spatial PASC algorithm to compute for each node $v \in S$ its relative position to $u$, i.e., $r(u,v)$.

\begin{corollary}
\label{lem:relative_position}
    Let $u \in S$.
    The \emph{spatial PASC algorithm} computes $r(u,v)$ for each $v \in S$ in $O(\log n)$ rounds.
\end{corollary}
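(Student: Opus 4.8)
The plan is to derive this essentially immediately from \Cref{lem:pasc:spatial}, by observing that the relative position decomposes coordinate-wise into exactly the two quantities that the spatial PASC algorithm computes. Recall that $r(u,v) = (r_H(u,v), r_V(u,v))$, where $r_H(u,v)$ depends only on the sign of $v_x - u_x$ (it equals $\rightarrow$, $0$, or $\leftarrow$ according to whether this difference is positive, zero, or negative) and $r_V(u,v)$ depends only on the sign of $v_y - u_y$. Hence computing $r(u,v)$ for all $v$ is the same as computing, at each $v$, the signs of the horizontal and vertical coordinate differences between $u$ and $v$.

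First I would run the spatial PASC algorithm (\Cref{lem:pasc:spatial}) with $u$ chosen as the reference node $u_r$ to compute, at each node $v \in S$, the sign of the horizontal distance from $u$ to $v$. By the case distinction above, this sign is precisely $r_H(u,v)$, and by \Cref{lem:pasc:spatial} it is available at every $v$ after $O(\log m)$ rounds, where $m = |S| \leq n$, hence within $O(\log n)$ rounds. Then I would invoke the spatial PASC algorithm a second time, again with $u$ as reference, along the vertical axis, so that each $v$ obtains the sign of its vertical distance to $u$, i.e.\ $r_V(u,v)$, again in $O(\log n)$ rounds. After these two invocations each node $v$ locally stores the pair $(r_H(u,v), r_V(u,v)) = r(u,v)$, which is the desired output. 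Since a value in $\{ \rightarrow, 0, \leftarrow\} \times \{ \uparrow, 0, \downarrow\}$ requires only constantly many state bits, the constant-memory constraint is respected.

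The two invocations can be executed sequentially, giving a total of $O(\log n) + O(\log n) = O(\log n)$ rounds; alternatively, because each node partitions its pins into disjoint partition sets, the horizontal and vertical computations can be carried out concurrently on disjoint circuits, still within $O(\log n)$ rounds. There is no substantial obstacle here: the only point requiring care is to align the (signed) output of the spatial PASC primitive---the sign of the horizontal/vertical distance from $u$ to $v$---with the orientation convention in the definition of $r_H$ and $r_V$, so that, for instance, a positive horizontal sign at $v$ (meaning $u_x < v_x$) is recorded as $\rightarrow$ rather than $\leftarrow$. This is a matter of fixing the sign convention once, which is legitimate because a common compass orientation is assumed (or established in the preprocessing phase of \Cref{sec:prelim}).
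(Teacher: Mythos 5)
Your proposal is correct and matches the paper's derivation: the corollary is obtained directly from \Cref{lem:pasc:spatial} by running the spatial PASC algorithm with $u$ as the reference node to obtain the signs of the horizontal and vertical distances at every $v$, which together determine $r(u,v)$, all within $O(\log n)$ rounds. The paper treats this as immediate and gives no further argument, so your slightly more explicit write-up (two invocations plus the sign-convention remark) is just a fuller account of the same reasoning.
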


We now explain how to compute all incompressible nodes and with that all compressible subsegments.
Recall that compressible columns (rows) do not contain any turning points.
This is the case if the horizontal (vertical) distance of the column (row) to all turning points is not $0$.
Hence, we simply iterate through all turning points $u \in \mathit{TP}$ and apply the spatial PASC algorithm such that each node $v \in S$ can check whether its column (row) contains any turning points, i.e., whether for all $u \in \mathit{TP}$, $r_H(u,v) \neq 0$ ($r_V(u,v) \neq 0$).

In order to iterate through all turning points, we utilize the \emph{election primitive}\footnote{Note that the election primitive is not a leader election algorithm. It utilizes a leader node to elect a node from a given set. In case we apply the primitive in a segment, we can use the first node for that.} for trees of \cite{DBLP:conf/podc/PadalkinS24} which requires a single round.

\begin{lemma}
\label{lem:compressible_segments}
    We compute all incompressible nodes and with that all compressible segments in $O(k \log n)$ rounds.
\end{lemma}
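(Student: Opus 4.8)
The plan is to prove Lemma~\ref{lem:compressible_segments} by bounding the cost of the \textbf{incompressible node computation} procedure just described. The key observation is that the work decomposes into two phases: iterating over the turning points to determine, for each node, whether its column (row) is compressible; and then grouping the resulting compressible nodes into their maximal subsegments.

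First I would handle the iteration over turning points. There are exactly $k$ turning points in $T_I$. The election primitive of~\cite{DBLP:conf/podc/PadalkinS24} lets us single out one turning point $u \in \mathit{TP}$ per iteration in $O(1)$ rounds. For each elected $u$, I invoke the spatial PASC algorithm (\Cref{lem:pasc:spatial}, via \Cref{lem:relative_position}) so that every node $v \in S$ learns $r_H(u,v)$ and $r_V(u,v)$; this costs $O(\log n)$ rounds. Each node $v$ maintains two boolean flags, one for its column and one for its row, each initialized to \emph{compressible}; upon processing $u$, node $v$ sets its column (row) flag to \emph{incompressible} whenever $r_H(u,v) = 0$ ($r_V(u,v) = 0$), i.e., whenever it shares a column (row) with the current turning point. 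Since both flags are single bits, the constant-memory constraint is respected. After all $k$ turning points have been processed, a node is an incompressible node exactly when at least one of its two flags reads \emph{incompressible}; this correctly identifies the incompressible columns and rows, since a column (row) is incompressible precisely when it contains some turning point. The total cost of this phase is $k$ iterations, each taking $O(\log n)$ rounds for the spatial PASC call plus $O(1)$ for the election, yielding $O(k \log n)$ rounds.

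Once every node knows whether it is incompressible, the remaining task is to identify the compressible subsegments, i.e., the maximal runs of consecutive compressible horizontal and vertical nodes (each extended by its preceding incompressible node, as required by the \textbf{incompressible tree} algorithm). This grouping is purely local: along each already-oriented segment, a node can detect the boundary of a maximal compressible subsegment by checking the incompressibility flag of its immediate predecessor and successor, which takes $O(1)$ rounds of neighbor communication. Hence this phase adds only a constant number of rounds and does not dominate.

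Summing the two phases gives $O(k \log n) + O(1) = O(k \log n)$ rounds, establishing the stated bound. The main obstacle I anticipate is the correctness argument rather than the runtime: I must ensure that the per-node flag mechanism faithfully captures ``the column (row) of $v$ contains a turning point'' using only constant memory, which is why reducing each column/row test to a single sign-of-distance query $r_H(u,v) \neq 0$ (respectively $r_V(u,v) \neq 0$) against each turning point is the crucial simplification—it replaces explicit enumeration of which turning point lies in $v$'s line with a single accumulating bit. A secondary subtlety is confirming that sequential iteration over turning points (rather than a parallel computation) is genuinely necessary here: because a node has only constant memory, it cannot simultaneously store its relative position to all $k$ turning points, so the $\Theta(k)$ factor in the runtime is an inherent consequence of the model rather than a slack bound.
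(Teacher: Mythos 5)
Your proposal is correct and follows essentially the same route as the paper: iterate over the $k$ turning points via the election primitive, run the spatial PASC algorithm for each to let every node test $r_H(u,v)=0$ and $r_V(u,v)=0$, and accumulate the result in constant per-node memory, giving $k$ iterations of $O(\log n)$ rounds each. The paper's own proof is just a terse citation of \Cref{lem:relative_position} plus the $k$-iteration count; your write-up fills in the same details (the two accumulating flags and the local identification of subsegment boundaries) that the paper leaves implicit in the surrounding text.
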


\begin{proof}
    The correctness follows from \Cref{lem:relative_position}.
    Also by \Cref{lem:relative_position}, each iteration requires $O(\log n)$ rounds.
    We need $k$ iterations (one for each turning point).
\end{proof}

\begin{corollary}
    Given no additional assumptions, after an $O(\log n)$-round preprocessing (w.h.p.), the \emph{incompressible tree} algorithm reduces the initial tree to its incompressible shape in $O(k \log n)$ rounds.
\end{corollary}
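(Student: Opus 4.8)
The plan is to combine the two preprocessing components developed in this subsection with the core \emph{incompressible tree} algorithm of \Cref{th:incompressible_tree}. The statement we must establish says that, \emph{without} being given the incompressible nodes in advance, we can still reduce the initial tree to its incompressible form in $O(k \log n)$ rounds (after the usual $O(\log n)$-round preprocessing for leader election, compass orientation, and chirality). The key observation is that \Cref{th:incompressible_tree} already handles the reduction itself in $O(\log n)$ rounds \emph{once} each node knows whether it is an incompressible node, so the only missing ingredient here is the computation of that information, which is exactly what \Cref{lem:compressible_segments} supplies.

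First I would invoke the $O(\log n)$-round preprocessing phase (\Cref{sec:prelim}) to obtain a leader and a common compass orientation and chirality, as required by the \emph{incompressible tree} algorithm. Second, I would run the \emph{incompressible node computation} subroutine: by \Cref{lem:compressible_segments}, iterating the spatial PASC algorithm over all $k$ turning points (using the election primitive of \cite{DBLP:conf/podc/PadalkinS24} to step through them one per round) lets every node $v$ determine, for each turning point $u$, whether $r_H(u,v) \neq 0$ and $r_V(u,v) \neq 0$, and hence whether its column and row are free of turning points. This identifies all incompressible nodes and all compressible subsegments in $O(k \log n)$ rounds. Third, with the incompressible nodes now known, I would apply the \emph{incompressible tree} algorithm of \Cref{th:incompressible_tree}, which performs the \emph{segment coloring and shrinking} subroutine on all horizontal compressible subsegments in parallel and then on all vertical ones, completing the reduction to $i(T_I)$ in an additional $O(\log n)$ rounds with no collisions.

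Summing the two phases gives $O(k \log n) + O(\log n) = O(k \log n)$ rounds, matching the claimed bound. The correctness of the reduction step is inherited verbatim from \Cref{th:incompressible_tree}; the only thing that changes relative to that theorem is that the incompressible-node information is computed rather than assumed, and \Cref{lem:compressible_segments} guarantees this computation is both correct and within the stated budget.

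I do not anticipate a deep obstacle, since this corollary is essentially a composition of two already-proven results. The one point that needs a sentence of care is confirming that the computation phase and the reduction phase can be cleanly sequenced—i.e., that the information produced by \Cref{lem:compressible_segments} (each node's incompressible/compressible status together with the induced compressible subsegment structure) is precisely the input assumed by \Cref{th:incompressible_tree}. Since both phases operate on the \emph{same} tree $T_I$ (no operation is performed during the computation phase), the grid coordinates and turning points are unchanged when the reduction begins, so the status labels remain valid and the two results glue together without modification.
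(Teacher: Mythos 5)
Your proposal is correct and follows exactly the route the paper intends: the corollary is a direct composition of \Cref{lem:compressible_segments} (computing the incompressible nodes in $O(k\log n)$ rounds) with \Cref{th:incompressible_tree} (the $O(\log n)$-round reduction once those nodes are known), and your sequencing argument matches the paper's. Nothing is missing.
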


Note that in case no additional assumptions are given, i.e., the incompressible nodes are not given, the \emph{incompressible tree} algorithm does not help to reduce the runtime of the BFS shrinking algorithm since their runtimes match.

\subsection{Topological Reduction}\label{subsec:topo:eqv:tree}

In this section, we employ both shrinking and growth operations to achieve the reduction objectives. We begin by presenting the \emph{target tree} algorithm, which reduces an arbitrary tree $T_I$ to another topologically equivalent tree $T_F$.

Recall that $\length{s^I_i}$ ($\length{s^F_i}$) denote the length of segment $s_i$ in the initial (target) tree.
We assume that for each $i$, $\length{s^F_i} \leq \length{s^I_i}$.
Furthermore, we assume that initially, each segment $s_i$ stores a binary representation of $\length{s^F_i}$ (see \Cref{sec:prelim}) which is possible due to the first assumption.  
We also assume that we have a leader node given and that the nodes share a common compass orientation and chirality.
Otherwise, we establish these assumptions in a preprocessing phase (see \Cref{sec:prelim}).

Note that $\length{s^F_i} \leq \length{s^I_i}$ for all $i$ is not sufficient for $T_F$ to be a topologically equivalent tree of $T_I$ (see \Cref{fig:topological-eqv-eg}).
We are able to test whether $T_I$ and $T_F$ are topologically equivalent.

\begin{figure}[tbp] 
    \centering
    \includegraphics[scale=1]{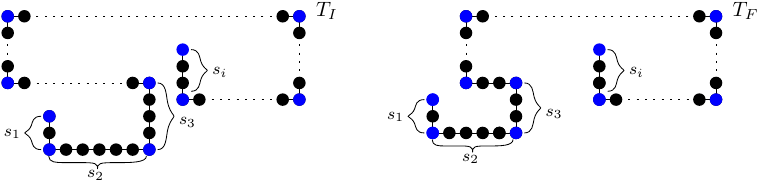}
    \caption{Even though every segment $s_i$ in the target tree $T_F$ satisfies $\length{s^F_i} \leq \length{s^I_i}$, this condition is not sufficient to guarantee that $T_F$ is topologically equivalent to $T_I$. In order to transform $T_I$ into $T_F$, the algorithm must also preserve the relative order of the turning points (highlighted in blue).}
    \label{fig:topological-eqv-eg}
\end{figure}

\noindent\textbf{Topological equivalence test}.
For each turning point $u$, we apply the spatial PASC algorithm on both $T_I$ and $T_F$, so that all other turning points can compare their relative positions to $u$ in $T_I$ and $T_F$, respectively.
However, we are not able to apply the subroutine on $T_F$ directly.
Instead, we will simulate $T_F$ in $T_I$ by identifying a unique node in $T_I$ for each node in $T_F$.
All other nodes will simply forward the signals between their neighbors.
This is effectively the same as if those nodes were not present in the execution.
Note that each turning point in $T_F$ already has a corresponding node in $T_I$.
Hence, we only have to identify $\length{s^F_i} - 2$ further nodes in each segment $s_i$.
We use the PASC algorithm to identify the first $\length{s^F_i} - 2$ segment nodes in each $s_i$.

\begin{lemma}
\label{lem:sim}
    Let $\cal A$ be any algorithm that does not apply any growth or shrinking operations.
    If for each $i$, $\length{s^F_i} \leq \length{s^I_i}$, $T_I$ can simulate $\cal A$ on $T_F$ after a preprocessing time of $O(\log n)$ rounds.
    
\end{lemma}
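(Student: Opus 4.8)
The plan is to formalize the idea sketched in the paragraph preceding the lemma: $T_I$ can mimic the behavior of $T_F$ by designating, for each node of $T_F$, a corresponding ``simulator'' node in $T_I$, and having all non-simulator nodes act as transparent relays that merely forward beeps between their two segment-neighbors. Since $\cal A$ uses no growth or shrinking, the combinatorial structure of $T_F$ never changes during the execution, so a static one-to-one embedding of $V(T_F)$ into $V(T_I)$ suffices; the whole difficulty is in (i) setting up this embedding cheaply and (ii) arguing that the forwarding makes the extra $T_I$-nodes invisible to the circuit dynamics.

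\textbf{Setting up the embedding.} First I would note that the turning points of $T_F$ are in one-to-one correspondence with those of $T_I$ by topological (indeed geometrical) equivalence, so each turning point already has a canonical simulator. It remains, within each segment $s_i$, to pick exactly $\length{s^F_i}-2$ interior nodes of the $\length{s^I_i}$ available interior nodes to serve as the interior simulators of $s^F_i$. By the hypothesis $\length{s^F_i}\le\length{s^I_i}$ there are enough nodes. Using \Cref{lem:pasc:segment} (the PASC primitive), each segment can, in $O(\log \length{s^I_i})=O(\log n)$ rounds, identify its first $\length{s^F_i}-2$ interior nodes; I would mark these together with the two bounding turning points as the simulators of $s^F_i$, and mark every remaining interior node as a relay. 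This marking is performed in all segments in parallel, so the total cost is $O(\log n)$ rounds, matching the claimed preprocessing bound. (The value $\length{s^F_i}$ is available to the segment by the input convention stated in \Cref{sec:problem}.)

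\textbf{Correctness of the simulation.} The key claim is that after preprocessing, $T_I$ running the following circuit rule reproduces exactly what $\cal A$ would observe on $T_F$: each relay node places its two segment-pins into a single partition set (so a beep entering on one side exits on the other), while each simulator node uses precisely the circuit configuration that $\cal A$ would dictate for the corresponding $T_F$-node. I would argue by the definition of circuits as connected components of the pin-graph $C=(P,L)$ from \Cref{sec:model}: collapsing each maximal chain of relay nodes between two consecutive simulators identifies a single link (up to the constant $c=2$ pins), so the circuits of $T_I$ restricted to simulator pins are isomorphic to the circuits $\cal A$ would form on $T_F$. Consequently, in every round, a simulator beeps on a given circuit in $T_I$ if and only if the corresponding $T_F$-node beeps on the corresponding circuit, and receives beeps identically; since beeps propagate instantaneously along a circuit, the relay chains introduce no delay. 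Because $\cal A$ performs no operations, no relay ever needs to be created or destroyed, so this invariant is maintained for the entire run.

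\textbf{Main obstacle.} The delicate point is the pin-labeling and chirality bookkeeping: to guarantee that the relay ``pass-through'' genuinely preserves which of the two parallel circuits a signal belongs to (the primary/secondary distinction that PASC and segment detection rely on), I must ensure the two pins are threaded consistently through each relay in accordance with the link-matching rule of \Cref{sec:model} (pin $i$ to $c-i+1$ under shared chirality, else $i$ to $i$). Having established a common chirality in preprocessing makes this a local, orientation-consistent rule, so each relay can locally set its single partition set to join its predecessor-pin-$1$ to its successor-pin-matching-side and likewise for pin $2$, thereby carrying both circuits through transparently. Once this is verified, the simulation is exact and the $O(\log n)$ preprocessing bound follows directly from \Cref{lem:pasc:segment}.
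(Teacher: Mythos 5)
Your proposal is correct and follows essentially the same route as the paper: turning points of $T_F$ are already identified with those of $T_I$, and the PASC primitive (\Cref{lem:pasc:segment}) selects the first $\length{s^F_i}-2$ interior nodes of each segment in parallel in $O(\log n)$ rounds, with the remaining nodes acting as transparent relays. Your additional verification that relay chains preserve the two parallel circuits is a more careful elaboration of a point the paper asserts without proof, but it does not change the argument.
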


\begin{proof}
    The turning points are already given. The PASC algorithm identifies $\length{s^F_i} - 2$ segment nodes in each segment $s_i$ and requires $O(\log n)$ rounds (see \Cref{lem:pasc:segment}).
\end{proof}

The \emph{topological equivalence subroutine} proceeds as follows.
We iterate through all turning points by utilizing the election primitive of \cite{DBLP:conf/podc/PadalkinS24} as before.
For each turning point $u$, we apply the spatial PASC algorithm on $T_I$ and $T_F$.
This allows each of the other turning points, to compare its relative position to $u$ in $T_I$ and $T_F$. Finally, after comparing all relative positions, we perform a notification round where we check whether there is a turning point that has identified a violation, i.e., a difference of its relative position to another turning point. For that, we first establish a circuit connecting all nodes. Then, each turning point that has identified a violation beeps on that circuit.
$T_F$ is topologically equivalent to $T_I$ if and only if there was no beep.

\begin{lemma}\label{lem:topo:eqv:subroutine}
    The \emph{topological equivalence} subroutine checks whether $T_F$ is topologically equivalent to $T_I$ in $O(k \log n)$ rounds.
\end{lemma}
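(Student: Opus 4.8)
The plan is to bound the runtime of the \emph{topological equivalence} subroutine by analyzing its two nested loops together with the per-turning-point work. The outer structure iterates over all $k$ turning points using the election primitive of \cite{DBLP:conf/podc/PadalkinS24}, which, as noted in the excerpt, requires only a single round to advance from one turning point to the next. Thus the iteration overhead contributes $O(k)$ rounds total and is dominated by the per-iteration cost. For each selected turning point $u$, the subroutine runs the spatial PASC algorithm on both $T_I$ and (the simulation of) $T_F$; by \Cref{lem:pasc:spatial}, each spatial PASC invocation takes $O(\log n)$ rounds, so a single iteration costs $O(\log n)$ rounds. Summing over the $k$ iterations yields $O(k \log n)$ rounds for the comparison phase.

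First I would account for the setup cost. Running the subroutine on $T_F$ is not done directly but through the simulation established in \Cref{lem:sim}, which identifies a unique node of $T_I$ for each node of $T_F$ in $O(\log n)$ preprocessing rounds; this is a one-time cost that is absorbed into the overall $O(k\log n)$ bound. I would then note that, within each iteration, once the spatial PASC values for $u$ are available in both $T_I$ and $T_F$, every other turning point can compare its relative position $r(u,v)$ across the two trees locally in $O(1)$ rounds, since relative positions are constant-size data and comparisons require $O(1)$ rounds. Hence the dominant cost per iteration remains the $O(\log n)$ spatial PASC computation.

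Next I would handle the final notification round: after all $k$ iterations complete, we establish a single circuit connecting all nodes and let any turning point that detected a mismatch beep. This detection-and-broadcast step is $O(1)$ rounds and does not affect the asymptotic bound. For correctness, I would argue that $T_F$ is topologically equivalent to $T_I$ precisely when, for every ordered pair of turning points $(u,v)$, the relative position $r(u,v)$ agrees in $T_I$ and $T_F$; since every turning point compares itself against every chosen $u$ over the course of the $k$ iterations, all $\binom{k}{2}$ relevant pairs are checked, so a violation is detected if and only if one exists. This matches the definition of topological equivalence (equal relative positions of all corresponding turning points) given earlier in the excerpt.

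The main obstacle I anticipate is justifying the simulation step cleanly, namely that running the spatial PASC algorithm on the simulated $T_F$ inside $T_I$ genuinely yields the same relative-position outputs as running it on a standalone $T_F$. The key point, which \Cref{lem:sim} isolates, is that the non-simulating nodes merely forward beeps between neighbors and are therefore transparent to the circuit-based computation, so the spatial PASC algorithm behaves as though only the selected $\length{s^F_i}-2$ nodes per segment (plus the already-present turning points) were present. Once this transparency is granted, the runtime is a routine sum $O(\log n) + k \cdot O(\log n) + O(1) = O(k \log n)$, which establishes the lemma.
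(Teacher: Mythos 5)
Your proposal is correct and follows essentially the same route as the paper's proof: a one-time $O(\log n)$ simulation setup for $T_F$, $k$ iterations over the turning points each dominated by an $O(\log n)$ spatial PASC invocation, local $O(1)$ comparisons of relative positions, and a single final notification round. Your additional remarks on the transparency of the forwarding nodes and on all pairs of turning points being checked are just expansions of what the paper delegates to its simulation and relative-position lemmas.
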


\begin{proof}
    By \Cref{lem:sim}, we are able to perform the spatial PASC algorithm on $T_F$.
    The correctness of the topological equivalence subroutine follows from \Cref{lem:relative_position}.
    By \Cref{lem:sim}, the preprocessing for $T_F$ requires $O(\log n)$ rounds.
    By \Cref{lem:relative_position}, each iteration takes $O(\log n)$ rounds.
    We must perform $k$ iterations (one for each turning point).
    The final notification round only adds a single round. Overall, the topological equivalence subroutine requires $O(k \log n)$ rounds.
\end{proof}

\noindent\textbf{Target tree reduction}. We start by outlining the challenges of reducing the initial tree to a target tree.
The first idea one may have is to apply the \emph{segment coloring and shrinking} subroutine (see \Cref{lemma:parallel-shrinking}) on all segments in parallel until each segment has reached its target length.
However, even if $T_I$ and $T_F$ are topologically equivalent, this approach does not guarantee that intermediate trees stay topologically equivalent which may cause collisions.

To always avoid collisions, we will apply the \emph{segment coloring and shrinking} subroutine on the compressible subsegments instead of the tree segments.
By uniformly changing the length of the compressible subsegments belonging to the same sequence of consecutive compressible columns (rows), we can make sure that all intermediate trees are topologically equivalent and with the guarantee that no collisions can occur (see \Cref{fig:invalid-transformation}).
\begin{figure}[tbp]
    \centering
    \includegraphics[scale=1]{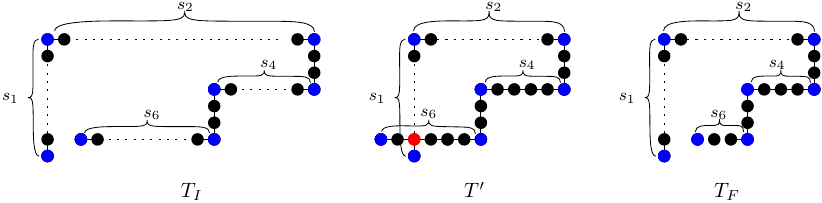}
    \caption{Applying parallel shrinking on horizontal segments ($s_2$, $s_4$, and $s_6$) to transform the initial tree $T_I$ into the target tree $T_F$, does not ensure that intermediate trees (e.g., $T'$) remain topologically equivalent. Instead, it might result in collisions (as indicated by the red node where segment $s_6$ collides with $s_1$).}
    \label{fig:invalid-transformation}
\end{figure}
Let $\length{c^I_j}$ ($\length{c^F_j}$) denote the length of the compressible subsegment $c_j$ in the initial (target) tree.
However, the assumption that for each $i$, $\length{s^F_i} \leq \length{s^I_i}$ does not imply that for each $j$, $\length{c^F_j} \leq \length{c^I_j}$, i.e., some compressible subsegments grow (see \Cref{fig:compressible-seg}). It is even possible that $\length{c^F_j}/\length{c^I_j} = \Omega(n)$ which makes it impossible to store $\length{c^F_j}$ within $c_j$.

\begin{figure}[tbp]
    \centering
    \includegraphics[scale=1.1]{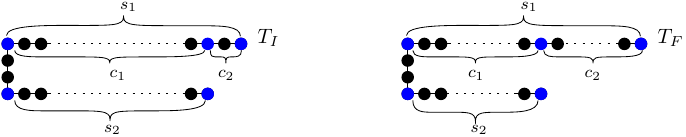}
    \caption{An example where shrinking the segment $s_2$ in the initial tree $T_I$ doesn't imply that the compressible subsegments $c_1$ and $c_2$ shrink accordingly in the target tree $T_F$. In fact, in $T_F$, the subsegment $c_2$ is larger than its initial length in $T_I$.}
    \label{fig:compressible-seg}
\end{figure}

In the following, we will explain how to deal with all these challenges.
We first outline our \emph{target tree} algorithm. It consists of three phases: \emph{compressible subsegment computation} phase, \emph{growth} phase, and \emph{shrinking} phase.
In the \emph{compressible subsegment computation} phase, we compute the lengths of all compressible subsegments of $T_I$ and $T_F$. Then, we grow or shrink all compressible subsegments in the other two phases.
In the \emph{growth} phase, we grow all compressible subsegments $c_j$ with $\length{c^F_j} > \length{c^I_j}$ and in the \emph{shrinking} phase, we shrink all compressible subsegments $c_j$ with $\length{c^F_j} < \length{c^I_j}$.

\medskip
\noindent\emph{Compressible subsegment computation} phase.
In this phase, we compute the lengths of all compressible subsegments of $T_I$ and $T_F$, respectively (see \Cref{fig:computation-phase}).
For that, we first compute all compressible columns and rows of $T_I$ and $T_F$ (see \Cref{lem:compressible_segments}).
Note that we need to simulate the subroutine for $T_F$ (see \Cref{lem:sim}).

Then, we apply the PASC algorithm on each compressible subsegment $c^I_j$ and $c^F_j$ in parallel to compute its length ($+1$) which we store within the segment itself.
Consider the compressible subsegments $c_j$ of a segment $s_i$.
For each $c_j$, $s_i$ now stores $\length{c^I_j} + 1$ and $\length{c^F_j} + 1$.
However, these lengths are not necessarily aligned since the incompressible nodes may differ between $T_I$ and $T_F$.
Hence, we apply the transfer primitive to shift $\length{c^I_j}$ and $\length{c^F_j}$ of each $c_j$ to a subsegment $c'_j$ of $s_i$.

\begin{figure}[tbp]
    \centering
    \includegraphics[scale=1.2]{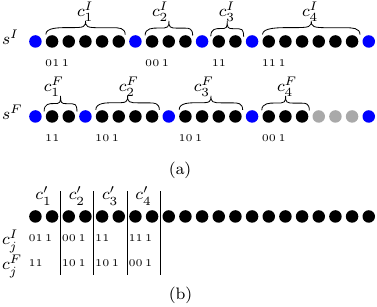}
    \caption{
        An illustration of the compressible subsegment computation phase: (a) depicts the computation of compressible subsegments (black nodes) along with their lengths, while blue nodes represent incompressible ones. (b) shows the result after aligning the lengths using the transfer primitive.
    }
    \label{fig:computation-phase}
\end{figure}

\begin{lemma}
\label{lem:ttt:compressible_subsegment_computation}
    In the \emph{compressible subsegment computation} phase, the \emph{target tree} algorithm computes $\length{c^I_j} + 1$ and $\length{c^F_j} + 1$ for each $c_j$ and stores it in a segment $c'_j$ in $O(k \log n)$ rounds.
\end{lemma}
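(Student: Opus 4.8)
The plan is to split this phase into three stages matching the algorithm's description and to bound each separately, with the dominant cost coming from locating the incompressible nodes. First I would compute all compressible columns and rows of both $T_I$ and $T_F$. For $T_I$ this is a direct application of \Cref{lem:compressible_segments}, costing $O(k \log n)$ rounds. For $T_F$ the subroutine cannot be run directly, so I would first invoke \Cref{lem:sim} to embed $T_F$ into $T_I$: the turning points already correspond, and the PASC algorithm identifies the $\length{s^F_i}-2$ interior nodes of each segment, with all remaining nodes merely forwarding signals. Since this simulation is transparent to any algorithm that performs no growth or shrinking---and the compressible-node computation of \Cref{lem:compressible_segments} is of this kind, relying only on the spatial PASC and election primitives---the same $O(k \log n)$ bound applies to $T_F$, after an $O(\log n)$ preprocessing for the embedding. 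This stage dominates the runtime.

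Second, I would apply the PASC algorithm (\Cref{lem:pasc:segment}) on every compressible subsegment $c^I_j$ and $c^F_j$ in parallel. Each such subsegment, extended by the preceding incompressible node so that we never have to shrink to length $0$, has at most $n$ nodes, so \Cref{lem:pasc:segment} computes and stores $\length{c^I_j}+1$ and $\length{c^F_j}+1$ within the subsegment in $O(\log n)$ rounds, and parallelism across all subsegments keeps the total at $O(\log n)$.

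The subtle point, and the step I expect to be the main obstacle, is \emph{alignment}. Because the incompressible nodes of $T_I$ and $T_F$ need not coincide, the compressible subsegments of $T_I$ and of $T_F$ inside a single tree segment $s_i$ are in general supported on different physical nodes, so the two computed values end up on mismatched subsegments and cannot yet be manipulated jointly. I would resolve this by using the transfer primitive to move both values of each $c_j$ onto a common subsegment $c'_j$ of $s_i$. Since each value has $O(\log n)$ bits, each transfer costs $O(\log n)$ rounds; running all transfers in parallel---each confined to its own tree segment, so they do not interfere---keeps this stage at $O(\log n)$. Summing the three stages, the $O(k \log n)$ cost of the first step dominates, yielding the claimed bound.
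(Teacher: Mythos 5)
Your decomposition into three stages (computing the compressible columns/rows of $T_I$ and of a simulated $T_F$, running PASC on the subsegments in parallel, and aligning the two length values onto common subsegments $c'_j$ via the transfer primitive) is the same as the paper's, and your runtime accounting for the first two stages matches. However, there is a genuine gap on the correctness side: you never verify that the segment $s_i$ actually has enough nodes to host all of the destination subsegments $c'_j$ disjointly, which is the main content of the paper's proof. Each $c'_j$ must hold the binary representations of $\length{c^I_j}+1$ and $\length{c^F_j}+1$, i.e.\ roughly $\left\lceil \log\left(\max\{\length{c^I_j},\length{c^F_j}\}+2\right)\right\rceil/c$ nodes, and since a single tree segment $s_i$ can contain many compressible subsegments, one must check that these requirements sum to at most $\length{s^I_i}$. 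The paper does this by bounding the sum by $\frac{1}{c}\left(\length{s^I_i}+\length{s^F_i}\right)\le\frac{2}{c}\length{s^I_i}$, which is at most $\length{s^I_i}$ for $c\ge 2$ pins per edge, crucially using the standing assumption $\length{s^F_i}\le\length{s^I_i}$. Without this capacity argument the alignment step is not known to be feasible, so the claim that the values can be ``stored in a segment $c'_j$'' is unsupported.

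A second, more minor issue: you assert that the transfers can all run in parallel because each is ``confined to its own tree segment.'' But the transfers that must coexist are those of the several $c_j$'s lying \emph{within the same} segment $s_i$; their destinations $c'_j$ all live inside $s_i$, so they are not independent. The paper accordingly charges $O(k_i\log n)=O(k\log n)$ rounds for this step ($k_i-1$ subsegments, values of $O(\log n)$ bits each). This does not affect the final $O(k\log n)$ bound, which is dominated by the incompressible-node computation in any case, but your $O(\log n)$ claim for the alignment stage is not justified as stated.
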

\begin{proof}
    The correctness follows by construction.
    We only need to make sure that we have enough nodes for all $c'_j$'s.
    Each $c'_j$ needs at least $\left\lceil \log \left( \max \{\length{c^I_j}, \length{c^F_j}\} + 2 \right) \right\rceil/c$ nodes to store the lengths where $c$ is the number of bits per node and value.
    For $c \geq 2$, we obtain
    \begin{align*}
        \sum_{c_j \subseteq s_i} \length{c'_j}
        &= \sum_{c_j \subseteq s_i} \left\lceil \log \left( \max \{\length{c^I_j}, \length{c^F_j}\} + 2 \right)\right\rceil/c \\
        &\leq \frac{1}{c} \sum_{c_j \subseteq s_i} \left(\max\{\length{c^I_j}, \length{c^F_j}\} + 2 \right) \\
        &\leq \frac{1}{c} \sum_{c_j \subseteq s_i} \left(\length{c^I_j} + \length{c^F_j} + 2 \right) \\
        &\leq \frac{1}{c} \left( \sum_{c_j \subseteq s_i} \length{c^I_j} + (k_i - 1) + \sum_{c_j \subseteq s_i} \length{c^F_j} + (k_i - 1) \right) \\
        &\leq \frac{1}{c} \left( \length{s^I_i} + \length{s^F_i} \right) \\
        &\leq \frac{2}{c} \cdot \length{s^I_i} \\
        &= \length{s^I_i}
    \end{align*}
    where $k_i$ denotes the number of incompressible nodes in $s_i$. Therefore, we have enough memory space to store all values.

    By \Cref{lem:sim}, the preprocessing for the simulation of $T_F$ requires $O(\log n)$ rounds.
    By \Cref{lem:compressible_segments}, the computation of all compressible subsegments requires $O(k \log n)$ rounds.
    By \Cref{lem:pasc:segment}, the computation of the lengths requires $O(\log n)$ rounds.
    The execution of the transfer primitive requires $O(k_i \log n) = O(k \log n)$ rounds (see \Cref{sec:prelim}) since we have $k_i - 1$ compressible subsegments and the lengths are bounded by $\length{s^I_i} \leq n$.
\end{proof}

\medskip
\noindent\emph{Growth and shrinking} phases.
In the growth (shrinking) phase, we grow (shrink) all compressible subsegments $c_j$ with $\length{c^F_j} > \length{c^I_j}$ ($\length{c^F_j} < \length{c^I_j}$).
Observe that it does not matter which nodes exactly perform a growth (shrinking) operation in a segment as long as the number of operations is the same.
The idea is therefore to first compute a subsegment $c''_j$ for each $c_j$ and then to perform the growth (shrinking) operations of $c_j$ in $c''_j$. Note that we cannot use $c'_j$ for $c_j$ since it may not have enough nodes, i.e., $\length{c'_j} < \length{c^I_j} + 1$.

We split the phase into two subphases: \emph{segmentation }subphase and \emph{transformation} subphase.
In the \emph{segmentation} subphase, we compute a subsegment $c''_j$ for each $c_j$ and in the \emph{transformation} subphase, we apply the growth (shrinking) operations.
In both subphases, we will only consider the $c_j$'s relevant for the phase, i.e., the $c_j$'s with $\length{c^F_j} > \length{c^I_j}$ ($\length{c^F_j} < \length{c^I_j}$) for the growth (shrinking) phase.
Note that each $c'_j$ is able to compare $\length{c^F_j}$ and $\length{c^I_j}$ to determine in which phase it is participating.

\noindent\emph{Segmentation subphase}.
In this phase, we compute a subsegment $c''_j$ for each $c_j$ (see \Cref{fig:growth-phase,fig:shrinking-phase}).
The subsegment has to satisfy three properties.
First, it must be long enough to perform all operations at the same rate as $c_j$.
This is the case if $\length{c''_j} \geq \length{c^I_j} + 1$.
Second, it must be long enough to store $\length{c^I_j}$ and $\length{c^F_j}$.
This is the case if $\length{c''_j} \geq \length{c'_j} \geq \left\lceil \log \left( \max \{\length{c^I_j}, \length{c^F_j}\} + 2 \right) \right\rceil/c$ where $c$ is the number of bits per node and value.
We will use $\length{c''_j} = \max\{ \length{c^I_j}, \length{c^F_j} \} + 1$ which satisfies the first two properties.
Third, all subsegments $c''_j$ are pairwise disjoint. The subphase iterates through all $c'_j$ by utilizing the election primitive of \cite{DBLP:conf/podc/PadalkinS24}.
In each iteration, we proceed as follows.
Let $s'_i$ be the subsegment without all already computed subsegments.
Initially, $s'_i = s_i$.
First, we compute $c''_j$ by applying the PASC algorithm on $s'_i$ to identify the first $\max\{ \length{c^I_j}, \length{c^F_j} \} +1$ nodes.
Then, we transfer $\length{c^F_j}$ and $\length{c^I_j}$ from $c'_j$ to $c''_j$.
Finally, we remove $c''_j$ from $s'_i$ and proceed to the next iteration.
We proceed to the next subphase once all $s_i$ have computed their $c''_j$'s.

\begin{figure}[htbp!]
    \centering
    \includegraphics[scale=1.2]{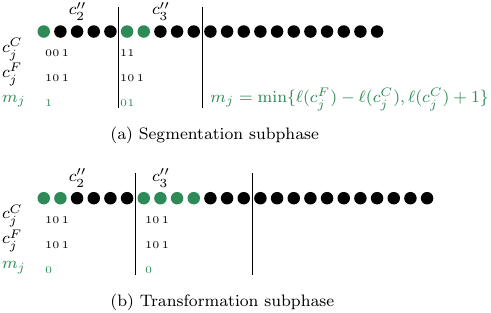}
    \caption{
        Growth phase.
        The green nodes in (a) perform the growth operations.
        The green nodes in (b) show the result of these operations.
    }
    \label{fig:growth-phase}
\end{figure}

\begin{figure}[htbp!]
    \centering
    \includegraphics[scale=1.2]{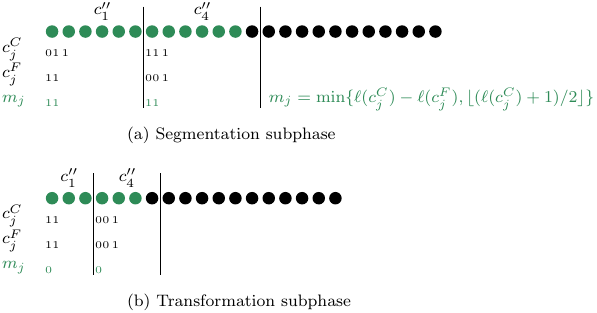}
    \caption{
        Shrinking phase.
        The green nodes in (a) perform the shrinking operations.
        The green nodes in (b) show the result of these operations.
    }
    \label{fig:shrinking-phase}
\end{figure}

\begin{lemma}
\label{lem:ttt:segmentation}
    In the \emph{segmentation} subphase, the \emph{target tree} algorithm computes a disjoint subsegment $c''_j$ with length $\max\{ \length{c^I_j}, \length{c^F_j} \} + 1$ and stores $\length{c^F_j}$ and $\length{c^I_j}$ in it in $O(k \log n)$ rounds.
\end{lemma}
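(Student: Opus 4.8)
The plan is to establish three things: (i) the iterative construction assigns each relevant $c_j$ a subsegment $c''_j$ of exactly the claimed length that correctly stores $\length{c^I_j}$ and $\length{c^F_j}$; (ii) the subsegments $c''_j$ are pairwise disjoint and actually fit inside the current segment $s_i$; and (iii) the whole subphase runs in $O(k\log n)$ rounds. The first point is largely bookkeeping: in each iteration the election primitive of~\cite{DBLP:conf/podc/PadalkinS24} selects one $c'_j$, the PASC algorithm applied to the remaining subsegment $s'_i$ identifies its first $\max\{\length{c^I_j},\length{c^F_j}\}+1$ nodes as $c''_j$ (\Cref{lem:pasc:segment}), and the transfer primitive moves $\length{c^F_j}$ and $\length{c^I_j}$ into $c''_j$. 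Disjointness is immediate, since each selected $c''_j$ is removed from $s'_i$ before the next iteration. It remains only to note that $c''_j$ can indeed hold both values, which follows from $\length{c''_j}=\max\{\length{c^I_j},\length{c^F_j}\}+1 \ge \left\lceil \log\left(\max\{\length{c^I_j},\length{c^F_j}\}+2\right)\right\rceil/c$ for $c\ge 2$.

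I expect the main obstacle to be (ii): showing that the reserved subsegments actually fit, i.e.\ that their total length never exceeds the number of nodes currently available in $s_i$. For the shrinking phase this is easy, because each relevant $c''_j$ is simply the compressible subsegment $c_j$ together with its preceding incompressible node, so the $c''_j$'s are physically present and trivially disjoint. The delicate case is the growth phase, where $\length{c^F_j}>\length{c^I_j}$ forces $c''_j$ to reserve strictly more nodes than $c_j$ currently occupies. Here I would run a counting argument in the spirit of \Cref{lem:ttt:compressible_subsegment_computation}. Writing $G$ for the set of growing subsegments of $s_i$ and $k_i$ for the number of incompressible nodes of $s_i$ (the same in $T_I$ and $T_F$ by topological equivalence, so that $\length{s^I_i}=k_i+\sum_{c_j}\length{c^I_j}$ and $\sum_{c_j}\length{c^F_j}=\length{s^F_i}-k_i$), we have $|G|\le k_i-1$, and using $\length{s^F_i}\le\length{s^I_i}$,
\begin{align*}
    \sum_{c_j\in G}\length{c''_j}
    &= \sum_{c_j\in G}\left(\length{c^F_j}+1\right)
    \le \sum_{c_j}\length{c^F_j} + (k_i-1) \\
    &\le \left(\length{s^I_i}-k_i\right)+(k_i-1) < \length{s^I_i}.
\end{align*}
Hence the growing subsegments fit within $s_i$ even though individual ones expand, precisely because the global budget $\length{s^F_i}\le\length{s^I_i}$ frees up enough nodes from the subsegments destined to shrink. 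The bound for the shrinking phase is analogous, carried out on the segment in the state it has after the growth phase.

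Finally, for the runtime (iii) I would bound the number of iterations. Within each segment the election primitive processes the relevant $c'_j$'s one at a time, and there are at most $k_i-1=O(k)$ of them, with all segments running in parallel and kept in lockstep by the synchronization mechanism of~\cite{DBLP:journals/nc/PadalkinSW24} at a constant-factor cost. Each iteration performs one PASC call to carve out $c''_j$ and one transfer of two $O(\log n)$-bit values, each costing $O(\log n)$ rounds (\Cref{lem:pasc:segment} and \Cref{sec:prelim}), while the election step itself costs $O(1)$ rounds. Multiplying the $O(k)$ iterations by the $O(\log n)$ cost per iteration yields the claimed $O(k\log n)$ bound.
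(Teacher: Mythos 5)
Your proposal is correct and follows essentially the same route as the paper's proof: correctness and disjointness by construction (each $c''_j$ is carved out of $s'_i$ and removed before the next iteration), a global budget argument using $\length{s^F_i}\le\length{s^I_i}$ to show the reserved subsegments fit in both the growth and shrinking phases, and a runtime of $O(k)$ iterations times $O(\log n)$ rounds per iteration. Your explicit check that $c''_j$ is long enough to store the two length values is a small addition the paper handles in the surrounding text rather than in the proof itself.
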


\begin{proof}
    Since $s'_i$ always stays connected, the correctness follows by construction.
    We only need to make sure that we have enough nodes for all $c''_j$'s.
    For that, we consider the subphase separately for the growth and shrinking phase.
    Note that $s_i$ has at least $\length{s^I_i}$ nodes in both executions of the subphase since we only add additional nodes in between them.
    
    First, consider the growth phase.
    Since $\length{c''_j} = \max\{ \length{c^I_j}, \length{c^F_j} \} + 1 = \length{c^F_j} + 1$,
    we obtain $$\sum_{\substack{c_j \subseteq s_i,\\ \length{c^F_j} > \length{c^I_j}}} \length{c'_j} = \sum_{\substack{c_j \subseteq s_i,\\ \length{c^F_j} > \length{c^I_j}}} \left( \length{c^F_j} + 1 \right) \leq \length{s^F_i} \leq \length{s^I_i}.$$
    Next, consider the shrinking phase.
    Since $\length{c''_j} = \max\{ \length{c^I_j}, \length{c^F_j} \} + 1 = \length{c^I_j} + 1$,
    we obtain $$\sum_{\substack{c_j \subseteq s_i,\\ \length{c^F_j} < \length{c^I_j}}} \length{c'_j} = \sum_{\substack{c_j \subseteq s_i,\\ \length{c^F_j} < \length{c^I_j}}} \left( \length{c^I_j} + 1 \right) \leq \length{s^I_i}.$$
    Hence, in both phases, we have enough nodes for all $c''_j$'s.

    By \Cref{lem:pasc:segment}, the PASC algorithm requires $O(\log n)$ rounds.
    The transfer of the lengths require $O(\log n)$ rounds (see \Cref{sec:prelim}).
    All other steps only require a single round.
    Hence, each iteration requires $O(\log n)$ rounds.
    Since we need $O(k_i) = O(k)$ iterations, the runtime follows.
\end{proof}

\medskip
\noindent\emph{Transformation subphase}.
In this subphase, we apply the growth (shrinking) operations on all $c''_j$ in parallel (see \Cref{fig:growth-phase,fig:shrinking-phase}).
In each iteration, subsegment $c''_j$ proceeds as follows.
Let $\length{c^C_j}$ denote the current length of $c_j$.
Initially, $\length{c^C_j} = \length{c^I_j}$.
First, it computes the minimum $m_j$ of the number of necessary and possible growth (shrinking) operations.
In the growth phase, this is $m_j = \min\{ \length{c^F_j} - \length{c^C_j}, \length{c^C_j} + 1 \}$ and in the shrinking phase, this is $m_j = \min\{ \length{c^C_j} - \length{c^F_j}, \lfloor (\length{c^C_j} + 1)/2 \rfloor \}$.
Then, it updates $\length{c^C_j}$ to $\length{c^C_j} + m_j$ in the growth phase or $\length{c^C_j} - m_j$ in the shrinking phase.
Next, it marks the number of nodes necessary to perform all operations.
For that, it applies the PASC algorithm to identify the first $m_j$ nodes in the growth phase or the first $2m_j$ nodes in the shrinking phase.
We mark all nodes with a distance less than $m_j$ in the growth phase and $2m_j$ in the shrinking phase.

\begin{lemma}
\label{lem:ttt:transformation}
    In the \emph{transformation} subphase, the \emph{target tree} algorithm grows (shrinks) all compressible subsegments $c_j$ with $\length{c^F_j} > \length{c^I_j}$ ($\length{c^F_j} < \length{c^I_j}$) to a length of $\length{c^F_j}$ in $O(\log^2 n)$ rounds.
\end{lemma}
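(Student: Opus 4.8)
The plan is to split the argument into two independent threads---a runtime bound and a collision-freeness argument---and to treat the growth and shrinking phases symmetrically. For the runtime, I would show that each phase terminates after $O(\log n)$ iterations and that each iteration costs $O(\log n)$ rounds.

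First, correctness of the target length. I would argue by induction on the iteration count that $\length{c^C_j}$ stays within $[\min\{\length{c^I_j},\length{c^F_j}\},\max\{\length{c^I_j},\length{c^F_j}\}]$ and eventually equals $\length{c^F_j}$. In the growth phase the term $\length{c^F_j} - \length{c^C_j}$ in $m_j$ prevents overshooting, while the capacity term $\length{c^C_j}+1$ is exactly the maximum number of growth operations a path of $\length{c^C_j}$ nodes admits under the one-operation-per-node rule; symmetrically, in the shrinking phase $\lfloor(\length{c^C_j}+1)/2\rfloor$ matches the halving rate of the \emph{segment coloring and shrinking} subroutine (\Cref{lemma:parallel-shrinking}). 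Here I would also invoke \Cref{lem:ttt:segmentation}, which guarantees $\length{c''_j} = \max\{\length{c^I_j},\length{c^F_j}\}+1$, so $c''_j$ is always long enough both to host the operations and to carry the stored lengths.

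Second, the iteration count. Whenever the capacity term is the active minimum, the length (almost) doubles in the growth phase or roughly halves in the shrinking phase; once the distance-to-target term takes over, one further iteration finishes the subsegment. Hence $c_j$ needs $O(\log(\max\{\length{c^I_j},\length{c^F_j}\})) = O(\log n)$ iterations, since every length is bounded by $n$. The PASC step that marks the first $m_j$ (or $2m_j$) nodes dominates each iteration at $O(\log n)$ rounds by \Cref{lem:pasc:segment}, while computing and applying $m_j$ is $O(1)$ arithmetic; subsegments that finish early idle, and the synchronization mechanism of \cite{DBLP:journals/nc/PadalkinSW24} keeps all parallel executions aligned at only a constant-factor cost. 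Multiplying yields $O(\log^2 n)$ rounds per phase, and hence for the whole subphase.

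The step I expect to be the real obstacle is collision-freeness. The crux is the structural fact of \cite{DBLP:conf/algosensors/AlmalkiGM24} that all compressible subsegments in the same maximal run of consecutive compressible columns (rows) share the same $\length{c^I_j}$ and the same $\length{c^F_j}$; consequently they compute identical $m_j$ in every iteration and change length in lockstep. I would then argue, exactly as in \Cref{th:incompressible_tree}, that this uniform rate keeps every incompressible column (row) rigid---no node can enter or leave it---so each intermediate tree remains topologically equivalent to $T_I$, and by the reasoning around \Cref{fig:invalid-transformation} neither node nor cycle collisions can arise. Treating growth and shrinking in \emph{separate} phases rather than mixing them is what makes this invariant applicable, because within a single phase all relevant subsegments move monotonically in the same direction, so the lockstep argument never has to reconcile simultaneous opposite motions.
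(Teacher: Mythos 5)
Your proposal is correct and follows essentially the same route as the paper: each iteration costs $O(\log n)$ rounds dominated by the PASC marking step, the number of iterations is $O(\log n)$ because lengths are bounded by $n$ and the capacity terms let each subsegment (roughly) double or halve per iteration, and collision-freeness is inherited from the lockstep argument of \Cref{th:incompressible_tree} applied to compressible subsegments of the same maximal run. You in fact spell out the collision-freeness invariant in more detail than the paper, which simply cites \Cref{th:incompressible_tree}; the only item you omit is the $O(\log n)$-round transfer primitive used to redistribute stored length values after operations, which does not affect the asymptotic per-iteration cost.
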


\begin{proof}
    The correctness follows from the same arguments as in the proof of \Cref{th:incompressible_tree}.

    The computation of $m_j$ and update of $c^C_j$ require $O(1)$ rounds.
    By \Cref{lem:pasc:segment}, the PASC algorithm requires $O(\log n)$ rounds.
    The growth/shrinking operations require $O(1)$ rounds (for the shrinking phase, see \Cref{lemma:parallel-shrinking}).
    The transfer primitive requires $O(\log n)$ rounds.
    Overall, each iteration of the transformation phase requires $O(\log n)$ rounds.

    It remains to bound the number of iterations.
    For that, note that $\length{c^F_j} \leq n$ since otherwise, $\length{s^F_i} > \length{c^F_j} > n \geq \length{s^I_i}$ which is a contradiction to our assumptions.
    Since we can double/half the nodes of $c_j$ in each iteration, $O(\log n)$ iterations suffice.
\end{proof}

Finally, we apply the operations.
In the \emph{growth} phase, each marked node performs a growth operation.
In the \emph{shrinking} phase, we apply one iteration of the \emph{segment coloring and shrinking} subroutine on the marked nodes.
Note that the growth operations may split up the binary representation of the stored values.
In this case, we deal with them in the same way as with the excess nodes during the simulation of $T_F$ (see \Cref{lem:sim}):
The newly grown nodes do not participate in the computation of $m_j$ (or any other future computation, e.g., in the next segmentation subphase) and simply forward the signals which has the same effect as removing them again.
However, they do participate in the final PASC algorithm to mark nodes to grow. We have a similar problem in the \emph{shrinking} phase.
If we apply a shrinking operation on two nodes that store $c$ bits, the absorbing node has to store $2c$ bits.
Since we may need to apply up to $O(\log n)$ shrinking operations, the absorbing node can end up with $cn$ bits which it is not capable of storing.
In order to resolve this issue, we redistribute the stored values within the subsegment by using the transfer primitive. We proceed to the next phase (if we are in the \emph{growth} phase) or terminate the algorithm (if we are in the \emph{shrinking} phase) once $m_j = 0$ for all $c_j$ in all $s_i$.

\begin{theorem}\label{the:target-tree}
    After an $O(\log n)$-round preprocessing (w.h.p.), the \emph{target tree} algorithm reduces $T_I$ to $T_F$ in $O(k \log n + \log^2 n)$ rounds.
\end{theorem}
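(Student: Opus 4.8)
The plan is to assemble the theorem directly from the three phases of the \emph{target tree} algorithm, whose individual costs are already bounded, and to complement the runtime accounting with a correctness argument showing that every intermediate tree stays topologically equivalent to $T_I$, so that no collision can ever arise. First I would charge the $O(\log n)$-round w.h.p.\ preprocessing (leader election, compass alignment, and chirality agreement from \Cref{sec:prelim}), which supplies the segment directions and orderings the phases rely on. I would also note that, if topological equivalence of $T_F$ to $T_I$ is not assumed, the \emph{topological equivalence} subroutine can be run as a prelude in $O(k\log n)$ rounds (\Cref{lem:topo:eqv:subroutine}), which is absorbed into the final bound.

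For correctness, the central invariant I would prove is that changing the lengths of all compressible subsegments belonging to the same maximal run of consecutive compressible columns (rows) \emph{at the same rate} leaves every incompressible column (row) intact---no node enters or leaves one---exactly as in the proof of \Cref{th:incompressible_tree}. Since topological equivalence is determined solely by the relative positions of the turning points, and these lie on incompressible columns and rows, this invariant guarantees that every intermediate tree is topologically equivalent to $T_I$ and hence collision-free. I would invoke \Cref{lem:ttt:transformation}, which already establishes collision-freedom within each transformation subphase, and observe that the growth phase (executed first) only inserts nodes into compressible runs while the shrinking phase (executed second) only removes them; running growth first also ensures each $s_i$ still has at least $\length{s^I_i}$ nodes when the shrinking \emph{segmentation} subphase needs them, so the memory bound in \Cref{lem:ttt:segmentation} continues to hold across the phase boundary.

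I would then dispose of the two bookkeeping subtleties flagged before the statement, both of which leave the asymptotics unchanged: newly grown nodes are excluded from all subsequent length computations and merely forward signals (the same device used to simulate $T_F$ in \Cref{lem:sim}), and the bits accumulated by repeated shrinking operations are periodically redistributed within each subsegment via the transfer primitive so that no node is ever asked to store more than $O(1)$ bits. Finally I would sum the contributions: the \emph{compressible subsegment computation} phase costs $O(k\log n)$ rounds (\Cref{lem:ttt:compressible_subsegment_computation}); each \emph{segmentation} subphase costs $O(k\log n)$ rounds (\Cref{lem:ttt:segmentation}); and each \emph{transformation} subphase costs $O(\log^2 n)$ rounds (\Cref{lem:ttt:transformation}). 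Adding the growth and shrinking phases together with the preprocessing yields $O(k\log n+\log^2 n)$ rounds.

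The main obstacle I anticipate is the correctness argument at the phase boundary: I must verify that performing \emph{all} growth before \emph{all} shrinking---rather than interleaving the two---still drives every compressible subsegment to exactly its target length $\length{c^F_j}$ and never passes through a transient shape that fails to be topologically equivalent to $T_I$. This is precisely what the uniform-rate invariant above is designed to secure, and making it airtight (for subsegments that grow and subsegments that shrink sharing the same segment) is the step that needs the most care; everything else is a composition of the cited lemmas.
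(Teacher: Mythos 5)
Your proposal is correct and follows essentially the same route as the paper, whose proof of \Cref{the:target-tree} is simply the one-line composition of \Cref{lem:ttt:compressible_subsegment_computation,lem:ttt:segmentation,lem:ttt:transformation}. The additional care you take at the phase boundary (the uniform-rate invariant, the node-count argument for the second segmentation subphase, and the bookkeeping for grown nodes and bit redistribution) is exactly the material the paper places in the prose surrounding the phase lemmas rather than in the theorem's proof itself, so you are elaborating the same argument rather than departing from it.
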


\begin{proof}
    The correctness and runtime follow from \Cref{lem:ttt:compressible_subsegment_computation,lem:ttt:segmentation,lem:ttt:transformation}.
\end{proof}

\section{The Adjacency Model}\label{sec:adjacency} 
In this section, we consider the \textsc{Single Node Reduction} problem in the \emph{adjacency model}, where the initial shape is any connected shape. We begin by electing a unique leader in $O(\log n)$ rounds w.h.p. (see \Cref{sec:prelim}). We also compute compass alignment and a common chirality among nodes, which is similarly completed in $O(\log n)$ rounds w.h.p. Once the leader election and compass alignment are complete, all nodes begin the main shrinking process simultaneously, guided by a signal from the leader. The reduction is obtained through a distributed simulation of the \emph{elimination} algorithm proposed in \cite{DBLP:conf/algosensors/AlmalkiGM24}, referred to hereafter as the \emph{shape reduction} algorithm. Specifically, for any connected initial shape $S_I$ consisting of $C$ columns and $R$ rows, the \emph{shape reduction} algorithm applies the spatial PASC (\Cref{lem:pasc:spatial}) to first partition the shape based on the parity of its columns and rows, alternating between shrinking columns and rows until $S_I$ is reduced to a single node. 

\medskip
\noindent\textbf{Shape reduction.}
Consider any connected shape $S_I$, in the adjacency model, where edges are added between all adjacent nodes when they become adjacent~\cite{DBLP:conf/algosensors/AlmalkiGM24}. It is important to note that this process has the \emph{neighbor handover} property, where the node $u_j$ absorbs $u_{j+1}$ and inherent any node connected to $u_{j+1}$.

Let the shape $S_I$, consists of $C$ columns and $R$ rows. Let $c_j \in {C}$ be a column indexed by $j$ from west to east, and let $r_i \in {R}$ be a row indexed by $i$ from south to north. Each column $c_j\in C$ and row $r_i \in {R}$ has two partition sets: primary and secondary. 
Between columns, the primary partition set of a column $c_j$ is connected to the secondary partition set of its predecessor $c_{j-1}$. Similarly, the secondary partition set of a column $c_j$ is connected to the primary partition set of its predecessor $c_{j-1}$. 
Within a column, the primary partition set of a row $r_i$ is connected to the primary partition set of its predecessor $r_{i-1}$. The secondary partition set of a row $r_i$ is connected to the secondary partition set of its predecessor $r_{i-1}$.
This configuration forms two disjoint circuits---primary and secondary circuits---for each column $c_j$, in these circuits, the partition sets alternate between primary and secondary. In the preprocessing phase, we elect a reference node $u_0$ in a column $c_{j'}$. This reference node propagates signals along the circuits to compute the parity of the distance from column $c_{j'}$ to each column $c_j \in C$. In particular, if a column receives the signal via its secondary partition set, it is assigned odd parity; if it receives the signal via its primary partition set, it is assigned even parity.

After partitioning all the columns into even and odd sets, for every even-parity column $c_j \in C$ (blue), with an odd-parity predecessor $c_{j-1}$ (green), the even-parity column $c_j$ absorbs its edge $(c_{j-1},c_j)$ towards $c_j$. In doing so, the primary circuit of the predecessor of $c_{j-1}$ (i.e., $c_{j-2}$) connects to the secondary circuit of $c_{j}$, and the secondary circuit of $c_{j-2}$ connects to the primary circuit of $c_{j}$. After shrinking all odd-parity columns, the spatial PASC algorithm is recomputed on the set of all rows $R$, partitioning them into even and odd sets. The odd-parity rows shrunk in the same manner analogous to the columns. This alternating process of column-shrinking followed by row-shrinking continues until the shape reduces to a single node. The final remaining node is able to detect that it is the last node, allowing it to terminate the shrinking process (see \Cref{fig:spatial_coloring}).

\begin{figure}[htbp!]
    \centering
    \includegraphics[scale=0.8]{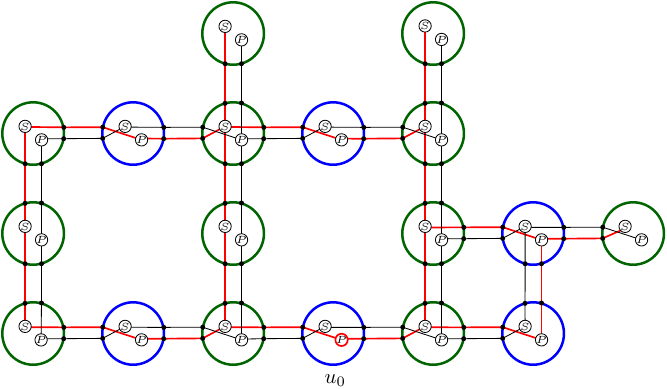}
    \caption{The application of spatial PASC on a shape $S$, where each adjacent pair of nodes is connected by an edge.}
    \label{fig:spatial_coloring}
\end{figure}

\begin{theorem}\label{theo:shrink-S}
    After an $O(\log n)$-round preprocessing (w.h.p.), 
    the \emph{shape reduction} algorithm shrinks any connected shape $S_I$ to a single node in $O(\log \max\{C, R\}) = O(\log n)$ rounds.
\end{theorem}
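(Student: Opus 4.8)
The plan is to establish the theorem in two parts: correctness (the process reaches a single node without ever causing a collision) and runtime ($O(\log\max\{C,R\})$ rounds for the shrinking itself, on top of the $O(\log n)$-round preprocessing). Since the \emph{shape reduction} algorithm is a distributed simulation of the centralized elimination algorithm of~\cite{DBLP:conf/algosensors/AlmalkiGM24}, I would first invoke that the centralized process is collision-free and terminates at a single node, and then argue that each of its steps can be carried out in $O(1)$ distributed rounds using the primary/secondary circuits together with the spatial PASC primitive. The preprocessing (leader election, compass alignment, common chirality) is charged separately at $O(\log n)$ rounds w.h.p.\ as in \Cref{sec:prelim}.

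For collision-freeness, which I expect to be the main obstacle since the shape need not be a tree, I would argue directly that a parity-based column (row) elimination induces a displacement that is a function of the column (row) index alone. Concretely, when every even-parity column absorbs its odd-parity predecessor, the horizontal displacement of a surviving node equals the negation of the number of eliminated columns lying between it and the anchor column, which depends only on the node's $x$-coordinate. Two consequences follow. First, the resulting compression map on surviving nodes is monotone and injective in the $x$-coordinate while leaving $y$ fixed, so no two surviving nodes can occupy the same grid point, ruling out node collisions. Second, since the displacement of a node is determined solely by its coordinate and not by any path, the displacement vectors along the two arcs of every cycle agree, so the cycle-collision condition $\vec{v}_{p_1} = \vec{v}_{p_2}$ holds for all cycles. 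The adjacency-closure (\emph{neighbor handover}) property guarantees that when a node is absorbed its incident edges are inherited by the absorbing node, so connectivity is preserved throughout; the symmetric argument applies to the alternating row-elimination steps.

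For the runtime, I would show that each elimination step costs $O(1)$ rounds and that $O(\log\max\{C,R\})$ steps suffice. The parity labeling of the columns is exactly the first iteration of the (spatial) PASC algorithm on the cross-column circuits---primary linked to secondary between consecutive columns, and straight within each column so that all of its nodes share one parity---and therefore takes $O(1)$ rounds, analogously to \Cref{lem:pasc-on-segment}. Performing the absorption and reconfiguring the circuits to reconnect the surviving columns is likewise $O(1)$. Each column-elimination step removes one parity class and hence roughly halves the number of occupied columns, and symmetrically for rows; interleaving the two, after $O(\log C)$ column steps and $O(\log R)$ row steps---hence $O(\log\max\{C,R\})$ steps in total---both dimensions are reduced to $1$, leaving a single node that detects its isolation and terminates. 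Multiplying the $O(1)$ per-step cost by the number of steps yields the bound, and since $\max\{C,R\}\le n$ we obtain $O(\log\max\{C,R\}) = O(\log n)$.
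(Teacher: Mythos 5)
Your proposal follows essentially the same route as the paper's proof: alternate parity-based column and row eliminations, each costing $O(1)$ rounds via one iteration of the spatial PASC, halving the respective dimension until a single node remains, for $O(\log \max\{C,R\}) = O(\log n)$ rounds on top of the $O(\log n)$-round preprocessing. Your explicit collision-freeness argument (displacement depending only on the column/row index, hence monotone, injective, and path-independent so that the cycle condition $\vec{v}_{p_1}=\vec{v}_{p_2}$ holds) is actually more detailed than the paper's own proof, which only carries out the halving and runtime computation and leaves collision avoidance implicit in the simulation of the centralized elimination algorithm of~\cite{DBLP:conf/algosensors/AlmalkiGM24}.
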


\begin{proof}
Let $S_I$ be a connected shape consisting of $C$ columns and $R$ rows. The distributed shrinking via \emph{spatial PASC} algorithm alternates between shrinking columns and rows. In the first step, all columns $\{c_1, c_2, \ldots, c_C\}$ are partitioned into even-parity and odd-parity sets using \emph{spatial PASC} (see~\cite{DBLP:journals/nc/PadalkinSW24}), and the odd-parity columns are shrunk. This reduces the number of columns to $C / 2$, while the number of rows remains $R$. In the next step, the rows $\{r_1, r_2, \ldots, r_R\}$ are similarly partitioned in the same way which reduces the number of rows to $R / 2$, while the reduced columns remain unchanged. This alternation between columns and rows ensures that shrinking proceeds.
After $r$ rounds, the number of columns and rows is halved $C_r = C / 2^{\lceil r / 2 \rceil},$ $ R_r = R / 2^{\lfloor r / 2 \rfloor}$. The process continues until $C_r = 1$ and $R_r = 1$. Since the maximum number of rounds required is $\log C$ for columns and $\log R$ for rows, the total time complexity of the algorithm is $O(\log \max \{C + R\})$.
\end{proof}

\section{Open Problems}\label{sec:conclusion}
Several open problems remain to be addressed. One problem is to study the symmetric shrinking mechanisms that avoid reliance on leaders or predefined orientations. Another problem is encoding the target shape and transforming between shapes, particularly under memory constraints, for example, growing from a single node to a target shape or transforming a larger shape to a smaller one, or a smaller to a larger. Mixed models that allow both shrinking and growing also raise questions about how to preserve memory. Also, exploring alternative communication methods, such as local visibility, offers the potential for improving coordination in these systems.
An exciting broader research direction opened by this work is the integration of rapid communication schemes, such as reconfigurable circuits, into fast, parallel transformation processes, potentially combining size-preserving and size-changing operations.

\bibliography{bibliography}

\end{document}